  \providecommand\BibTeX{{%
    \normalfont B\kern-0.5em{\scshape i\kern-0.25em b}\kern-0.8em\TeX}}}
\let\oldmaketitle\maketitle
\renewcommand{\maketitle}{%
  \oldmaketitle%
  \thispagestyle{plain}%
  \pagestyle{plain}}
\renewcommand\footnotetextcopyrightpermission[1]{}
\newsavebox{\@brx}
\newcommand{\llangle}[1][]{\savebox{\@brx}{\(\m@th{#1\langle}\)}%
  \mathopen{\copy\@brx\kern-0.5\wd\@brx\usebox{\@brx}}}
\newcommand{\rrangle}[1][]{\savebox{\@brx}{\(\m@th{#1\rangle}\)}%
  \mathclose{\copy\@brx\kern-0.5\wd\@brx\usebox{\@brx}}}
\newcommand{\raf}[1]{(\ref{#1})}
\newenvironment{customthm}[1]
{\innercustomthm}
{\endinnercustomthm}
\begin{document}

\title{Approximately Socially-Optimal Decentralized Coalition Formation with Application to P2P Energy Sharing}

\author{Sid Chi-Kin Chau}
\email{sid.chau@anu.edu.au}
\affiliation{%
 \department{School of Computing}
  \institution{Australian National University}
  \country{Australia}
}

\author{Khaled Elbassioni}
\email{khaled.elbassioni@ku.ac.ae}
\affiliation{%
%  \department{School of Computing}
  \institution{Khalifa University}
  \country{UAE}
}

\author{Yue Zhou}
\email{yue.zhou@anu.edu.au}
\affiliation{%
 \department{School of Computing}
  \institution{Australian National University}
  \country{Australia}
}

\renewcommand{\shortauthors}{S. C.-K. Chau, K. Elbassion, Y. Zhou}

% \begin{abstract}
\begin{abstract}
The paradigm of P2P (peer-to-peer) economy has emerged in diverse areas. {\em P2P energy sharing} is a new form of P2P economy in the energy sector, which allows users to establish longer-term sharing arrangements of their local energy resources (e.g., rooftop PVs, home batteries) with joint optimized energy management. In such a P2P setting, a coalition of users is formed for sharing resources in a decentralized manner by self-interested users based on their individual preferences. A likely outcome of decentralized coalition formation will be a stable coalition structure, where no group of users could cooperatively opt out to form another coalition that induces higher preferences to all its members. Remarkably, there exist a number of fair cost-sharing mechanisms (e.g., equal-split, proportional-split, egalitarian and Nash bargaining solutions of bargaining games) that model practical cost-sharing applications with desirable properties, such as the existence of a stable coalition structure with a small strong price-of-anarchy (SPoA) to approximate the social optimum. In this paper, we provide general results of decentralized coalition formation: (1) We establish a logarithmic lower bound on SPoA, and hence, show several previously known fair cost-sharing mechanisms are the best practical mechanisms with minimal SPoA. (2) We show that the SPoA of egalitarian and Nash bargaining cost-sharing mechanisms to match the lower bound. (3) We derive the SPoA of a mix of different cost-sharing mechanisms. (4) We present a decentralized algorithm to form a stable coalition structure. (5) Finally, we apply our general results to P2P energy sharing and present an empirical study of decentralized coalition formation in a real-world project. We study the empirical SPoA, which is observed within $95\%$ of the social optimal cost with coalitions of 2 and 3 users, via fair cost-sharing mechanisms.
\end{abstract}

% \end{abstract}

\keywords{Decentralized Coalition Formation, Approximately Socially Optimal, Strong Price of Anarchy}

%\pagenumbering{gobble}

% remove ACM Reference after Abstract
%\settopmatter{printacmref=false}

\maketitle
\pagestyle{plain}

\section{Introduction} \label{sec:intro}

People are increasingly empowered by peer-to-peer (P2P) interactions. Sharing economy is a prominent example, by which resources, services and facilities are shared dynamically among end-users in a peer-to-peer fashion.  In diverse sharing economy scenarios, users often form coalitions for sharing activities, such as ridesharing/carpooling, parking/storage sharing, and various group buying activities. In this paper, we shed light on coalition formation in sharing activities and particularly apply the P2P sharing paradigm to the application of distributed energy sharing.

\subsection{P2P Energy Sharing}

In the era of smart energy grid, end users are expected to gain increasing control of their own energy services with enhanced autonomy and transparency. The trend of decentralized control leads to the notion of ``transactive energy'', providing end users more choices and control of how energy is generated, delivered and utilized \cite{nist}. The paradigms of {\em P2P energy exchanges} \cite{TYMSPW18p2p} aim to empower transactive energy by enabling distributed coordination among local energy producers and consumers without centralized operators. The concept of P2P energy exchanges can be realized in two aspects (see Table~\ref{tbl:compare} for a comparison):
\begin{itemize}

\item {\bf Peer-to-peer Energy Trading}:
A traditional approach to enable P2P energy flows among users is by a real-time trading process that matches local suppliers and customers instantaneously \cite{CL17transenergy,LXSW14trading}. Usually, a bidding or auction process is conducted to determine the trading parties. This will result in a one-time transaction for each exchange operation that operates over a short timescale (e.g., hours). Each party is designated as a buyer (i.e., energy importer) or a seller (i.e., energy exporter). There is no joint optimization of mutual energy management systems between the parties.

\item {\bf Peer-to-peer Energy Sharing}: In contrast with real-time P2P trading, another approach is to rely on a longer-term sharing arrangement of mutual local energy systems of users \cite{chau19p2penergy}. The involved parties will agree on how to exchange energy during the agreed period (e.g., over months or years) with specifications of cost-sharing and coordination mechanisms. The longer arrangement period allows joint optimization of long-term energy management operations, where the involved parties can be hybrid prosumers, as energy importers and exporters at different times. Consequently, users are able to coordinate their consumption usage to maximize the benefits of their distributed energy generation. Sharing arrangements can be established by a negotiation and matching process. 

\end{itemize}

 \begin{figure}[t] %\vspace{-5pt}
        \centering \includegraphics[width=0.45\textwidth]{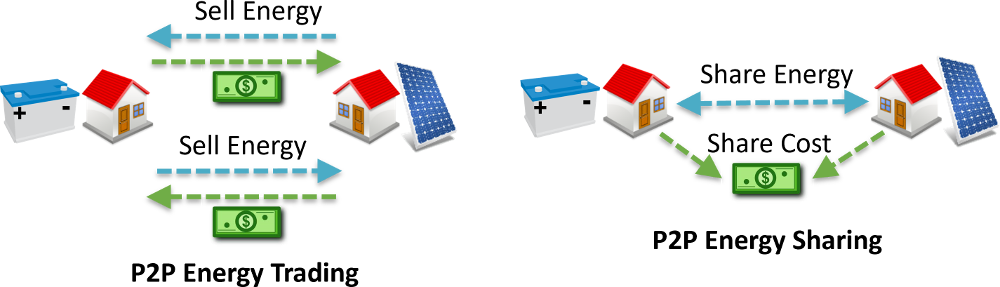} \vspace{-5pt}
        \caption{P2P energy trading vs. P2P energy sharing.}
        \label{fig:p2p} \vspace{-10pt}
    \end{figure}

%%%%%%%%%%%%%%%%%%%%%%%%%%%%%%%%%%%%%%%
\begin{table}[ht]   	
    \begin{tabularx}{\linewidth}{@{}c@{}|@{}c@{}|@{}c@{}}
    \hline    \hline
    & P2P Energy Trading & P2P Energy Sharing \\
    \hline
     Time-scale & Short (e.g., hours) & Long (e.g., months/yrs) \\
	 Roles & Designated sellers/buyers & Hybrid prosumers \\	 
	 Operations & One-time operations & Multi-time operations \\	 
	 Mechanisms & Bidding \& auction &  Negotiation \& matching \\	 	 
Joint optimization & No & Yes\\	 	 	 
    \hline    \hline
    \end{tabularx}
	\centering\caption{\label{tbl:compare}Comparison of P2P energy trading and sharing.} 
\end{table}

The concept of P2P energy sharing belongs to the paradigm of P2P sharing. Unlike the one-time energy trading,  P2P energy sharing allows users to coordinate their usage and optimize their energy management. This not only creates incentives for renewable energy adoption but also improves long-term energy efficiency. P2P energy sharing can be realized in a community for co-located users, or via virtual net metering for geographically dispersed users. In this paper, we model coalition formation in sharing activities and apply the results to P2P energy sharing.

\subsection{Modeling Decentralized Coalition Formation}

In this paper, we first study a general framework of coalition formation by self-interested users, which is subsequently applied to P2P energy sharing. Traditional cooperative game theory \cite{coopgamebook} usually considers forming a grand coalition involving every user mediated by a centralized planner, whereas practical sharing economy is often populated by small-group coalition formation in a decentralized manner by self-interested users. A better model for understanding decentralized coalition formation can be represented by a {\em hedonic game} \cite{HVW15, AB12,DG80,BJ02}, where participants aim to form coalitions among themselves according to certain preferences over the coalitions they may belong to. In such a setting, a likely outcome is a {\em stable} coalition structure of disjoint coalitions of participants, where no group of participants  could cooperatively opt out to form another possible coalition that could induce higher preferences to all its members. However, a hedonic game with arbitrary preferences may not induce a desirable stable coalition structure. 

In this paper, we adopt an approach from algorithmic game theory. We aim to devise practically useful mechanisms for decentralized coalition formation in a hedonic game that can lead to desirable stable coalition structures. In particular, we draw on our recent work \cite{CE17sharing} in studying decentralized coalition formation for the purpose of cost-sharing in sharing economy.  We previously showed some remarkable properties of several fair cost-sharing mechanisms (e.g., equal-split, proportional-split, egalitarian and Nash bargaining solutions of bargaining games) for practical cost-sharing applications, such as the existence of a stable coalition structure and a small {\em strong price-of-anarchy} (SPoA). SPoA is a common metric in algorithmic game theory \cite{AGTbook}, by which one compares the worst-case social cost of a strong Nash equilibrium (that allows any group of users to deviate jointly to form a coalition) and the cost of a social optimum. A decentralized mechanism with a small SPoA will guarantee a good approximation of the social optimum, without relying on centralized planning.  

{\bf Contributions:} In this paper, we provide several new results of decentralized coalition formation: 
\begin{enumerate}

\item We establish a logarithmic lower bound on SPoA for practical polynomial-time mechanisms, and hence, establish that several previously known fair cost-sharing mechanisms  are the best practical mechanisms with minimal SPoA.

\item We improve the SPoA of egalitarian and Nash bargaining cost-sharing mechanisms from the previously known square-root bound on SPoA to match the logarithmic lower bound.

\item We consider a mix of different cost-sharing mechanisms. We derive a logarithmic bound on SPoA with respect to mixed cost-sharing mechanisms.

\item We present a decentralized algorithm to form a stable coalition structure under several fair cost-sharing mechanisms, based on the deferred-acceptance algorithm.

\item Finally, we apply our general results of decentralized coalition formation to P2P energy sharing. We present and analyze an empirical study of decentralized coalition formation in a real-world P2P energy sharing project. We study the empirical SPoA and its effect with respect to various parameters in P2P energy sharing.
\end{enumerate}

\section{Related Work} \label{sec:related}

\subsection{Coalition Formation and Game Theory}
Coalition formation has been studied in traditional cooperative game theory \cite{coopgamebook}, which provides a foundation of forming a grand coalition, involving all the members, subject to certain axiomatic properties. For example, the notions of core, Shapley value, and nucleolus have been devised to construct proper transfer functions of utility among participants in a grand coalition \cite{coopgamebook}. However, such a setting is usually considered from the perspective of a centralized planner, who divides the benefits of a coalition among the members. 

There are a number of studies about non-cooperative coalition formation. These coalition formation models belong to the topic of hedonic games and network cost-sharing games (e.g., \cite{albers08ndg,AB12,ADTW08ndg,ADKTWR09ndg,EFM09con,HVW15,H13,RS09,RS14costsharing,KS15sharinggame}). A useful notion to model non-cooperative coalition formation is a {\em strong Nash equilibrium}. Unlike a typical Nash equilibrium that tolerates only unilateral strategic change by one participant at a time, a strong Nash equilibrium can tolerate collective strategic changes by any group of participants. Such collective strategic changes can model an alternate coalition formation from the existing coalition structure. 

Our study of decentralized coalition formation follows the approach of hedonic games \cite{AB12,H13}, where users form coalitions based on individual preferences over coalition formation.  We characterize the properties of strong Nash equilibrium in specific settings of hedonic games. But typical hedonic games allow arbitrary preferences of coalitions, which can lead to the absence of a strong Nash equilibrium (or so-called a core-stable coalition structure). In this paper, we consider hedonic games specifically for cost-sharing applications, such that users aim to split the associated cost of a coalition. In this cost-sharing setting, the individual preferences over coalition formation are governed by a cost-sharing mechanism. Our previous work \cite{CE17sharing}  showed that certain cost-sharing mechanisms can yield the existence of a strong Nash equilibrium. 

%Furthermore, we characterize the strong price-of-anarchy (SPoA) of hedonic games with certain cost-sharing mechanisms. SPoA is a common metric in algorithmic game theory \cite{AGTbook} that compares the worst-case ratio between a strong Nash equilibrium (that allows any group of users to deviate jointly to form stable coalitions) and a social optimum. Other studies related to our results are the price-of-anarchy for stable matching in a 2-coalition setting \cite{ADN09,ABH13}.  

\subsection{P2P Energy}

This paper explores P2P energy sharing. Relating to P2P energy sharing, there have been an extensive body of literature on P2P energy trading. For example, see an extensive survey in \cite{TYMSPW18p2p}, and the references therein. Among these studies,  P2P energy trading has been applied to distributed energy resource management \cite{LXSW14trading, ZWCZL16bidding, TCYHSPY16sharing}. Moreover, the idea of shared pool of energy has been proposed in various energy sharing applications, such as virtual power plants \cite{MK17vpp}, energy storage cloud \cite{LZKKX17cloud, KWPV17sharing} and other multi-user energy systems \cite{WKPV16sharing, LSRI18vsolar, CSAK12storage}. P2P energy trading has been recently demonstrated in a real-world microgrid of the Brooklyn Microgrid Project \cite{MGRKOW18brooklyn}. However, a key difference between this work and previous studies is that we focus on economic mechanisms for energy sharing in the form of a long-term arrangement and coalition formation mechanisms of self-interested users.

Virtual net metering \cite{vnm20, vnmPGnE} is a bill crediting system for community solar and shared energy storage without a common electricity meter. Originally, net metering allows a single user to earn credits of the net energy export from rooftop solar or energy storage as to offset the future energy consumption. Virtual net metering can be extended to a community scenario (of geographically dispersed users) with no physical grid connection to solar and energy storage systems. In this case, the utility operator will handle the credit transfer process from an energy generation source to a group of registered users with different electricity meters. Virtual net metering can also be applied to P2P energy sharing, where the users specify the credit transfer process according to a cost-sharing mechanism.

This paper extends our prior work \cite{chau19p2penergy, chau20coalition} in P2P energy sharing. But this work provides rigorous proofs of the theoretical results and extended simulation results. Previously, we have also conducted empirical studies of decentralized coalition formation for ride-sharing \cite{chau20rideshare} and mobile edge computing \cite{chau20infocom, XZC22edge}.

\section{Model and Notations}

\subsection{General Coalition Formation Model} 

First, we present a general model of decentralized coalition formation by self-interested participants, based on hedonic games \cite{AB12,HVW15}. In Sec.~\ref{sec:p2penergy}, we will apply the model to P2P energy sharing.

In this model, there are a set of $n$ self-interested participants ${\mathcal N}$. A coalition of participants is represented by a subset $G \subseteq {\mathcal N}$. A {\em coalition structure} represents a feasible state of coalition formation, which is denoted by a partition of ${\mathcal N}$ as ${\mathcal P} \subset 2^{\mathcal N}$, such that $\bigcup_{G \in \mathcal P} G = {\mathcal N}$ and $G_1 \cap G_2 = \varnothing$ for any pair $G_1, G_2 \in {\mathcal P}$.  % For example, ${\mathcal P} = \{\{i_1, i_2\}, \{ i_3 \} \}$.

Each element $G \in \mathcal P$ is called a coalition (or a group). The set of singleton coalitions, ${\mathcal P}_{\rm self} \triangleq \{\{i\} : i \in {\mathcal N} \}$, is called the {\em standalone coalition structure}, wherein no one forms a coalition with others.  We consider $K$-coalition structures with at most $K$ participants per coalition. In practice, $K$ is often much less than $n$, which models small-group coalition formation.  Let the set of partitions of ${\mathcal N}$ be ${\mathscr P}$. Let ${\mathscr P}^K \triangleq \{{\mathcal P} \in {\mathscr P} : |G| \le K \mbox{\ for each\ } G \in {\mathcal P}\}$ be the set of feasible coalition structures, such that each coalition consists of at most $K$ participants. In general, a hedonic game is defined by a complete and transitive preference relation ${\displaystyle \succcurlyeq _{i}}$ over the set ${\displaystyle \{G\subseteq {\mathcal N}:~i\in G\}}$ of coalitions that participant $i$ belongs to, but in this paper will consider only preferences defined by certain utility functions as given in the next subsection.

\subsection{Stable Coalition Structures}

Since participants are self-interested, they are motivated to join a coalition to maximize their own utility. Let $u_{i}(G)$ be the utility function of user $i$ when joining a coalition $G$, which maps $G$ to a numerical benefit that $i$ will perceive. For clarity, we set the standalone utility $u_i(\{i\}) = 0$ when $i$ is alone.

Given a coalition structure ${\mathcal P} \in {\mathscr P}^K$, a coalition of participants $G$ of size at most $K$ is called a {\em blocking coalition} with respect to ${\mathcal P}$, if $G \notin {\mathcal P}$, and all participants in $G$ can {\it strictly} increase their utilities when they form a coalition $G$ instead of any coalition $G'$ in ${\mathcal P}$, namely,
\begin{equation}
u_i(G) > u_i(G') \mbox{\ for all\ }  i \in G, G' \in {\mathcal P} \mbox{\ where\ }  i \in G'
\end{equation}

A coalition structure is called a {\em stable coalition structure}, denoted by $\hat{\mathcal P} \in {\mathscr P}^K$, if there exists no blocking coalition with respect to $\hat{\mathcal P}$. Note that a stable coalition structure is also a strong Nash equilibrium\footnote{A strong Nash equilibrium is a Nash equilibrium, in which no group of participants can cooperatively deviate in an allowable way that strictly benefits all of its members.}.  In a stable coalition structure, the utility for every participant is always non-negative $u_i(G) \ge 0$. Otherwise, the participant will not join any coalition because of $u_i(\{i\}) = 0$. Note that there may or may not exist a stable coalition structure in a hedonic game. In general, determining the existence of a stable coalition structure is NP-hard \cite{AB12,H13} (specifically, $\Sigma_2^p$-complete \cite{W13}).

\subsection{Cost-Sharing Mechanisms}

In this paper, we specifically consider coalition formation for cost-sharing, and the induced hedonic games by certain cost-sharing mechanisms. In such a setting, there is a non-negative cost function of each coalition $G$, denoted by $C(G)$. Also, denote by $C_i \triangleq C(\{ i \})$ the {\em standalone} (or default) cost for participant $i$, when $i$ is alone.

We note an important property called {\em cost monotonicity}, which holds intuitively in many cost-sharing applications: 
\begin{equation}
C(H) \le C(G), \quad \mbox{ if } H \subseteq G.
\end{equation}
Namely, a larger coalition should incur a larger cost. 

By agreeing to form a coalition $G$, the participants in $G$ are supposed to share the cost $C(G)$. A {\em cost-sharing mechanism} is characterized by a payment function $p_{i}(G)$, which is the shared cost of participant $i \in G$. %, such that $\sum_{i \in G} p_i(G) = C(G)$. 
A cost-sharing mechanism $p_{i}(\cdot)$ is said to be {\em budget balanced}, if $\sum_{i \in G} p_{i}(G) = C(G)$ for every $G \subseteq {\mathcal N}$. 

Given a coalition $G$, the utility function of participant $i \in G$ can be derived from the payment function as follows:
\begin{equation} \label{eqn:uipi}
u_i(G) = C_i - p_i(G)
\end{equation}
Namely,  the utility function measures the surplus of joining coalition $G$, as compared with being alone.

In this paper, we consider the following simple well-defined cost-sharing mechanisms (denoted by different superscripts):
\begin{enumerate}

\item {\bf Equal-split Cost-Sharing}: The cost is split equally among all participants: 
\begin{equation}
p^{\rm eq}_{i}(G) \triangleq \frac{C(G)}{|G|}
\end{equation}
Namely, $u^{\rm eq}_i(G) = C_i  - \frac{C(G)}{|G|}$. 

\item {\bf Proportional-split Cost-Sharing}: The cost is split proportionally according to the participants' standalone costs: 
\begin{equation}
p^{\rm pp}_{i}(G)\triangleq \frac{C_i \cdot C(G)}{\sum_{j \in G} C_j} 	
\end{equation}
Namely, $u^{\rm pp}_i(G) = C_i \cdot \frac{(\sum_{j \in G} C_j) - C(G)}{\sum_{j \in G} C_j}$. 
%If $C(\cdot)$ satisfies sub-additivity (i.e., $\sum_{j \in G} C_j \ge C(G)$), then proportional-split satisfies individual rationality. 

\item {\bf Bargaining-based Cost-Sharing}: The cost-sharing problem can be formulated as a bargaining game \cite{bargainbook}, with a feasible set, and a disagreement point, to which the participants will fall back on when no coalition is formed. Under the bargaining game model, the feasible set is the set of utility values $(\hat{u}_i)_{i \in G}$, such that $\sum_{i \in G} \hat{u}_i \le \sum_{i \in G}u_i(G)$ (equivalently, $\sum_{i \in G} p_i \ge C(G)$), and the disagreement point is $(\hat{u}_i = 0)_{i \in G}$ when each participant pays only the respective standalone cost.  There are two common bargaining solutions in the literature \cite{bargainbook}:
\begin{enumerate}

\item {\bf Egalitarian-split Cost-Sharing} is given by:
\begin{equation}
p^{\rm ega}_{i}(G) \triangleq C_i - \frac{(\sum_{j \in G} C_j) - C(G)}{|G|}
\end{equation}
Namely, all participants $i \in G$ will receive the same utility as $u^{\rm ega}_i(G) = \frac{(\sum_{j \in G} C_j) - C(G)}{|G|}$.
%If the cost function $C(\cdot)$ satisfies sub-additivity, then an egalitarian bargaining solution satisfies individual rationality. 
%Note that non-positive payment is possible because it may need to compensate those with low standalone costs to reach equal utility at every participant. 

\item {\bf Nash Bargaining Solution} is given by:
\begin{equation}
\big( p^{\rm ns}_{i}(G) \big)_{i \in G} \in \arg\max_{(\hat{p}_{i})_{i \in G}} \prod_{i \in G} u_i(\hat{p})
\end{equation}
subject to 
\begin{equation}
\qquad  \sum_{i \in G} \hat{p}_i = C(G) \notag
\end{equation}

%One can impose an additional constraint of non-negative payments: $\hat{p}_{i} \ge 0$ for all $i \in G$. 
%The subsequent section will show that Nash bargaining solution, irrespective of the constraint of non-negative payments, satisfies individual rationality.

\end{enumerate}

It can be shown that $p^{\rm ega}_{i}(G) = p^{\rm ns}_{i}(G)$, and hence both egalitarian-split cost-sharing and Nash bargaining solution produce equivalent coalition structures\footnote{Furthermore, in egalitarian-split and Nash bargaining cost-sharing, non-positive payment (i.e., $p^{\rm ega}_{i}(G) < 0$ or $p^{\rm ns}_{i}(G) < 0$) is possible because it may need to compensate those with low standalone costs to reach equal utility among all participants. \cite{CE17sharing} also showed the equivalence of stable coalition structures in egalitarian-split and Nash bargaining cost-sharing with and without non-negative payment constraints.} \cite{CE17sharing}. 

\end{enumerate}
%Other possible cost-sharing mechanisms include Shapely value, which depends on the coalition costs of each sub-coalitions. The computation of Shapely value has a higher computational complexity. The work of \cite{RS14costsharing, FR12netgames, HF12costsharing} studied Shapely value as a cost-sharing mechanism in coalition formation games.

\begin{figure}[t!] 
\centering
	\includegraphics[width=0.35\textwidth]{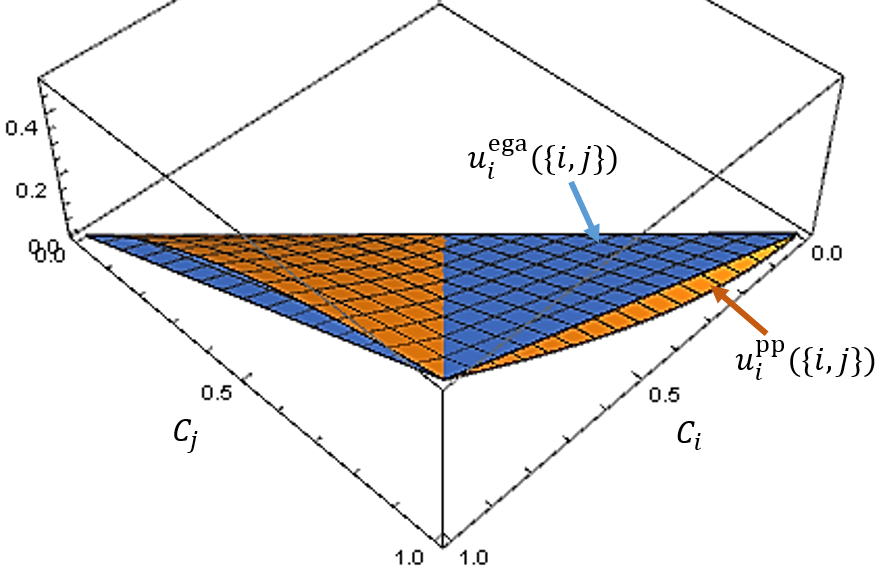}  
	\caption{Comparison of $u^{\rm ega}_{i}(\{i,j\})$ (blue) and $u^{\rm pp}_{i}(\{i,j\})$  (orange). If $C_j \ge C_i$, $u^{\rm ega}_{i}(\{i,j\}) \ge u^{\rm pp}_{i}(\{i,j\})$. Otherwise, $u^{\rm pp}_{i}(\{i,j\}) \ge u^{\rm eqa}_{i}(\{i,j\})$. }
	\label{fig:compare}
\end{figure}

{\bf Remarks:}
While the equal-split cost-sharing mechanism distributes the cost equally to every user regardless of their standalone costs, proportional-split and egalitarian-split cost-sharing mechanisms distribute the cost with varying degrees. For example, $G = \{i,j\}$, we plot $u^{\rm ega}_{i}(\{i,j\})$ and $u^{\rm pp}_{i}(\{i,j\})$ according to $C_i$ and $C_j$ in Fig.~\ref{fig:compare}, assuming $C(\{i,j\})=1$. If $C_j \ge C_i$, then $u^{\rm ega}_{i}(\{i,j\}) \ge u^{\rm pp}_{i}(\{i,j\})$. Otherwise, $u^{\rm pp}_{i}(\{i,j\}) \ge u^{\rm eqa}_{i}(\{i,j\})$. Namely, egalitarian-split cost-sharing favors smaller standalone costs, whereas proportional-split cost-sharing favors larger ones.

Unlike general hedonic games, certain cost-sharing mechanisms can guarantee the existence of a stable coalition structure as given by the following proposition.

\begin{proposition}[\cite{CE17sharing}] \label{prop:exist}
There exist stable coalition structures under equal-split, proportional-split and egalitarian-split cost-sharing mechanisms, and Nash bargaining solution.
\end{proposition}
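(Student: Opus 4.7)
My plan is to prove the proposition by a uniform greedy construction that works for all four mechanisms. The underlying observation is that, in each mechanism, the utilities of the members of a coalition are \emph{aligned}: there is a scalar summary $\phi(G)$ of the coalition such that, for any two coalitions $G, H$ with $i \in G \cap H$, participant $i$ strictly prefers $G$ to $H$ if and only if $\phi(G)$ beats $\phi(H)$ in a fixed direction independent of $i$. Concretely one can take $\phi^{\rm eq}(G) = C(G)/|G|$ (smaller is better) for equal-split; $\phi^{\rm pp}(G) = C(G)/\sum_{j \in G} C_j$ (smaller is better) for proportional-split; and $\phi^{\rm ega}(G) = (\sum_{j \in G} C_j - C(G))/|G|$ (larger is better) for egalitarian-split. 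Since the excerpt already notes that $p^{\rm ega}_i(G) = p^{\rm ns}_i(G)$, the Nash-bargaining case piggybacks on the egalitarian case.

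With alignment in hand, I would construct $\hat{\mathcal{P}}$ iteratively: pick $G_1$ to be a $\phi$-optimal coalition of size at most $K$ drawn from $\mathcal{N}$; then pick $G_2$ to be a $\phi$-optimal coalition of size at most $K$ drawn from $\mathcal{N}\setminus G_1$; and continue until $\mathcal{N}$ is exhausted. To verify stability, suppose for contradiction that some $G \subseteq \mathcal{N}$ with $|G| \le K$ strictly blocks $\hat{\mathcal{P}}$, and let $k$ be the smallest index with $G \cap G_k \neq \varnothing$. By minimality of $k$ we have $G \subseteq \mathcal{N} \setminus (G_1 \cup \cdots \cup G_{k-1})$, so $G$ was among the candidates available at the $k$-th iteration, and greedy optimality gives that $\phi(G_k)$ is at least as good as $\phi(G)$. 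On the other hand, any $i \in G \cap G_k$ strictly improves upon moving from $G_k$ to $G$, which by alignment forces $\phi(G)$ to strictly beat $\phi(G_k)$, a contradiction.

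The main subtlety is verifying the alignment claim cleanly for each mechanism and checking that strict improvement of even a single member of $G \cap G_k$ really does force strict improvement in the scalar $\phi$. For equal-split and proportional-split this follows directly from the payment formulas given in the excerpt (assuming $C_j > 0$, so that $\phi^{\rm pp}$ is well-defined), while for egalitarian-split it is immediate from the identity $u_i^{\rm ega}(G) = \phi^{\rm ega}(G)$. The construction here is only an existence argument — the per-stage optimization ranges over exponentially many subsets of size at most $K$ — but this is consistent with the paper's separate treatment of decentralized algorithms for actually computing stable partitions.
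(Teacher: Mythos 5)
Your argument is correct, but it takes a different route from the paper, which simply imports Proposition~\ref{prop:exist} from \cite{CE17sharing}; the closest self-contained path inside this paper is Proposition~\ref{prop:nocyc} (no cyclic preferences) combined with Theorem~\ref{thm:converge} (the deferred-acceptance process terminates at a stable structure). What you prove instead is that all four mechanisms satisfy a \emph{common ranking property}: a single scalar $\phi(G)$ orders coalitions consistently for every member, namely $\phi^{\rm eq}(G)=C(G)/|G|$, $\phi^{\rm pp}(G)=C(G)/\sum_{j\in G}C_j$, and $\phi^{\rm ega}(G)=\big(\sum_{j\in G}C_j-C(G)\big)/|G|$, with the Nash case reduced to the egalitarian one via $p^{\rm ns}_i=p^{\rm ega}_i$. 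The iterative ``pick a $\phi$-best coalition, remove it, repeat'' construction is then the classical top-coalition argument for core nonemptiness in hedonic games, and your blocking-coalition contradiction is carried out correctly: minimality of $k$ ensures $G$ was available at stage $k$, so $\phi(G_k)$ weakly beats $\phi(G)$, while a strict improvement for any $i\in G\cap G_k$ forces the reverse strict inequality. Note that your alignment observation is strictly stronger than the paper's acyclicity claim (a cyclic preference would make $\phi(G_1)$ strictly beat itself), so your route buys a shorter, purely existential proof, whereas the paper's route additionally yields an algorithm. Two small points you rightly flag but should pin down: for proportional-split you need $\sum_{j\in G}C_j>0$ for $\phi^{\rm pp}$ to be defined, and a participant with $C_i=0$ has utility $0$ in every coalition, hence can never strictly improve and never belongs to a blocking coalition, so the degenerate case does not break stability.
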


Another cost-sharing mechanism called usage-based cost-sharing mechanism is also considered in \cite{CE17sharing}, which does not always guarantee the existence of a stable coalition structure.

\subsection{Strong Price-of-Anarchy}

Given a coalition structure ${\mathcal P}$, let  $u(G) \triangleq \sum_{i \in G} u_i(G)$ and $u({\mathcal P}) \triangleq \sum_{ G \in {\mathcal P}} u(G)$. We call $u({\mathcal P})$ the social utility of ${\mathcal P}$.
A {\em social optimum} is a coalition structure that maximizes the total social utility of all users: ${\mathcal P}^\ast = \arg\max_{{\mathcal P} \in {\mathscr P}_K} u({\mathcal P})$. 

We define the utility-based {\em Strong Price of Anarchy} (SPoA) as the worst-case ratio between the social utility of a stable coalition structure and that of a social optimum over any instance of $u(\cdot)$ and its stable coalition structure:
\begin{equation}
{\sf SPoA}_K^u \triangleq \max_{u(\cdot), \hat{\mathcal P}}\frac{u({\mathcal P}^\ast)}{u(\hat{\mathcal P})}
\end{equation}
Specifically, the SPoA when using specific cost-sharing mechanisms is  denoted by ${\sf SPoA}^{u,{\rm eq}}_K$,  ${\sf SPoA}^{u,{\rm pp}}_K$, ${\sf SPoA}^{u,{\rm ega}}_K$, respectively. SPoA provides a metric of measuring social optimality.

We define the social cost of ${\mathcal P}$ by $C({\mathcal P}) \triangleq \sum_{ G \in {\mathcal P}} C(G)$. For any budget balanced cost-sharing mechanism, $C({\mathcal P}) = \sum_{i \in G, G \in {\mathcal P}}C_i-u({\mathcal P})$ (which follows from Eqn.~\raf{eqn:uipi}). Hence,  a social optimum equivalently minimizes the social cost of all users: ${\mathcal P}^\ast = \arg\min_{{\mathcal P} \in {\mathscr P}_K} C({\mathcal P})$. 

For cost-sharing applications, we define the cost-based SPoA with respect to cost over any instance of $C(\cdot)$ and its stable coalition structure:
\begin{equation}
{\sf SPoA}_K^C \triangleq \max_{C(\cdot), \hat{\mathcal P}}\frac{C(\hat{\mathcal P})}{C({\mathcal P}^\ast)}
\end{equation}

For arbitrary hedonic games and cost-sharing mechanisms, the SPoA can be as large as $K$:  ${\sf SPoA}_u^C = O(K)$ and ${\sf SPoA}_K^C = O(K)$. Next, we will see that certain cost-sharing mechanisms can induce only logarithmic SPoA.

\subsection{Mixed Cost-Sharing Mechanisms}

Previously, we only considered one single cost-sharing mechanism in a stable coalition structure. In this section, we extend a stable coalition structure to consider multiple cost-sharing mechanisms.
Equal-split, proportional-split and egalitarian-split cost-sharing mechanisms are called {\em pure} cost-sharing mechanisms. We then consider a mixed setting with these pure cost-sharing mechanisms. A {\em mixed} cost-sharing mechanism consists of a set of constituent pure cost-sharing mechanisms. For a coalition structure ${\mathcal P}$, the cost of $C(G)$, where $G \in {\mathcal P}$, is divided according to one of the constituent pure cost-sharing mechanisms for every participant $i \in G$. A stable coalition structure with respect to a mixed cost-sharing mechanism exists, if there is no blocking coalition induced by any constituent pure cost-sharing mechanism. Note that even if its constituent pure cost-sharing mechanisms always guarantee the existence of stable coalition structures, a mixed cost-sharing mechanism may not guarantee the existence of a stable coalition structure.

\section{Application to P2P Energy Sharing} \label{sec:p2penergy}

In this section, we apply decentralized coalition formation to the scenario of P2P energy sharing.  %This paper considers peer-to-peer (P2P) energy sharing \cite{chau19p2penergy}, which enables a longer-term sharing arrangement of the local energy systems among household users. The involved parties will agree on how to exchange energy over an extended period (e.g., over months or years) with specifications of cost-sharing and coordination mechanisms. The longer arrangement period allows joint optimization of multi-time energy management operations, where the involved parties can be both energy importers and exporters at different times. Consequently, users can coordinate their consumption usage to maximize the benefits of their distributed energy generation. Arrangements of P2P energy sharing can be established by decentralized coalition formation. 
We follow the model of P2P energy sharing in our prior work \cite{chau19p2penergy, chau20coalition}. Without sharing, each user will only use his local energy resources (e.g., rooftop PV and home battery) to satisfy local energy demand, in addition to acquiring energy from the grid. However, local energy resources may be underutilized or out of capacity. By sharing mutual energy resources in P2P energy sharing, the users can optimize the efficiency of their energy resources. In practice, energy importing and exporting among users can be achieved by ``virtual net metering'', which is a billing process at the grid operator that allows the credits of energy export of one user to offset the debits of energy import of another.

 \begin{figure}[t!] 
        \centering \includegraphics[width=0.35\textwidth]{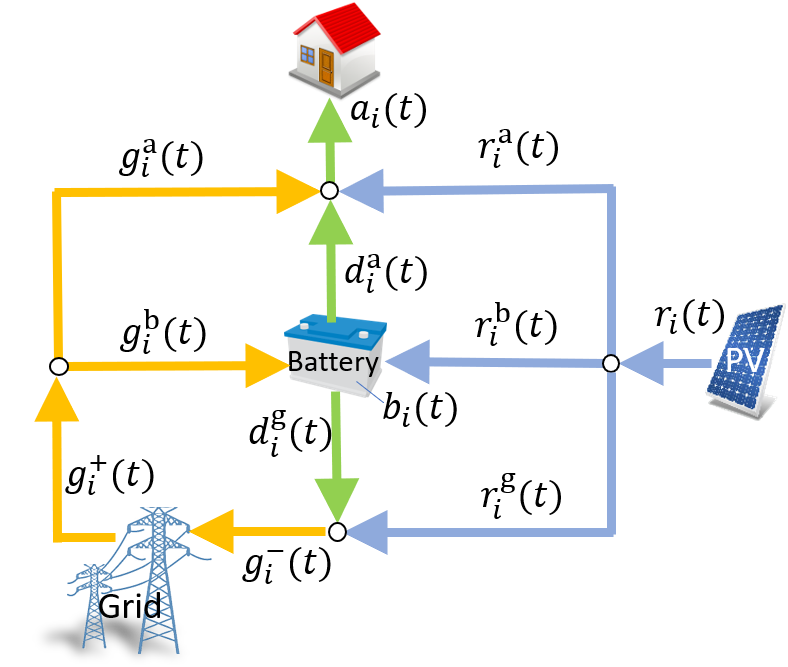}  
        \caption{Variables of the energy management system.}
        \label{fig:system}
    \end{figure}
	
We consider the scenario of allowing at most $K$ users to establish a P2P energy sharing agreement in a coalition for an extended period of time. We limit the size $K$, as to reduce the control overhead of multiple energy resources.
We first define the variables (Table~\ref{tab:notations}) of a user's energy management system, whose relationships are illustrated in Fig.~\ref{fig:system}:

\begin{itemize}
\item
{\bf Demand:}  Each user $i \in {\mathcal N}$ is characterized by an energy demand function $a_i(t)$. The demand function can be estimated based on the prediction from historical data. 

\item
{\bf Rooftop PV:}  Each user $i \in {\mathcal N}$ is equipped with rooftop PV, characterized by an energy supply function $r_i(t)$. The supply function $r_i(t)$ is divided into three feed-in rates: ${r}^{\rm a}_i(t)$ is for local demand, ${r}^{\rm b}_i(t)$ for charging battery, and ${r}^{\rm g}_i(t)$ for electricity feed-in to the grid. 

\item
{\bf Home Battery:}  Each user $i \in {\mathcal N}$ is equipped with a home battery, characterized by capacity ${\sf B}_i$. The battery is constrained by charging efficiency $\eta_{\rm c} \le 1$ and discharging efficiency $\eta_{\rm d} \ge 1$, charge rate constraint $\mu_{\rm c}$ and discharge rate constraint $\mu_{\rm d}$. Let $b_i(t) $ be the current state-of-charge in the battery at time $t$, and ${d}^{\rm a}_i(t)$ be the discharge rate for local demand, whereas ${d}^{\rm g}_i(t)$ be the discharge rate for electricity feed-in to the grid. 

\item
{\bf Grid:} Each user can also import electricity from the grid if his demand is not entirely satisfied. Further, rooftop PV and home battery can inject excessive electricity into the grid, when feed-in compensation is offered by a grid operator. Let ${\sf C}_{\rm g}^+$ be the per-unit cost by the grid on electricity consumption, and  ${\sf C}_{\rm g}^-$ be the per-unit compensation on electricity feed-in. Let $g_i^+(t)$ be the electricity consumption rate of the user $i$ at time $t$, and $g_i^-(t)$ be the electricity feed-in rate. Let ${g}^{\rm a}_i(t)$ be the consumption rate for local demand, and ${g}^{\rm b}_i(t)$ be the consumption rate for charging a battery. 

\end{itemize}

\begin{table}[t!]  
\begin{tabular}{@{}c@{}|p{0.85\columnwidth}@{}}
\hline \hline
${\mathcal N}$  & The set of users   \\
$a_i(t)$  &  Energy demand function of user $i$ \\
$r_i(t)$  &  Energy supply function of user $i$ \\
${r}^{\rm a}_i(t)$ & Energy supply for local demand of user $i$\\
${r}^{\rm b}_i(t)$ & Energy supply for battery charging of user $i$\\
${r}^{\rm g}_i(t)$ & Energy supply for feed-in to the grid of user $i$\\
${\sf B}_i$ & Battery capacity of user $i$\\
$\eta_{\rm c},\eta_{\rm d}$ & Charging and discharging efficiency \\
$\mu_{\rm c},\mu_{\rm d}$ & Charge and discharge rate constraints \\
$b_i(t)$ & Current state-of-charge in the battery at time $t$ \\
${d}^{\rm a}_i(t)$ & Discharge rate for local demand of user $i$\\
${d}^{\rm g}_i(t)$ & Discharge rate for feed-in to the grid of user $i$\\
${\sf C}_{\rm g}^+$ & Per-unit cost by the grid on electricity consumption\\
${\sf C}_{\rm g}^-$ & Per-unit compensation on electricity feed-in\\
$g_i^+(t)$ & Electricity consumption rate of user $i$ at time $t$\\
$g_i^-(t)$ &  Electricity feed-in rate of user $i$	\\
${g}^{\rm a}_i(t)$ & Consumption rate for local demand of user $i$\\ 
${g}^{\rm b}_i(t)$ & Consumption rate for battery charging of user $i$\\
\hline \hline
\end{tabular}
\caption{Key notations of variables.} \label{tab:notations}
\end{table}

%{\bf Remarks:} In this formulation, the rooftop PV provides free energy. Also, the feed-in tariff is assumed to be considerably less than the consumption tariff: $ {C}_{\rm g}^- \ll {\sf C}_{\rm g}^+ $. Hence, it will be more economical to use PV supply for local demand or battery charging, rather than exporting to the grid.

If a group of users $G \subseteq {\mathcal N}$ form a coalition to share their energy resources and to minimize their total operational cost, then ${C}(G)$ can be defined as the minimum cost of the energy management optimization. We formulate the energy management optimization problem in {\sf (EP)}, with the constraints of variables captured in Fig.~\ref{fig:system}.

\begin{figure}[!htb]
\begin{align}
 {\sf (EP)}\ & \ {C}(G) \triangleq \min  \sum_{t=1}^{T}  \sum_{i \in G}  \big({\sf C}_{\rm g}^+ {g}^+_i(t)- {\sf C}_{\rm g}^-  g^-_i(t) + {\sf C}_{\rm s} {s}^+_i(t) \big)  \notag\\
\textrm{s.t.\ } &   b_i(t+1) - b_i(t)  =  \eta_{\rm c} ( {r}^{\rm b}_i(t) + {g}^{\rm b}_i(t) ) \notag \\
 & \qquad \qquad \qquad \quad  - \eta_{\rm d} ({d}^{\rm a}_i(t) + {d}^{\rm g}_i(t) ), & \\
 &  0 \le b_i(t) \le {\sf B}_i, b_i(0)=0,  &  \label{eqn:over-}\\
 &   {g}^{\rm b}_i(t) + {r}^{\rm b}_i(t) \le \mu_{\rm c},  &  \\ 
 &   {d}^{\rm a}_i(t) + {d}^{\rm g}_i(t) \le \mu_{\rm d},  &  \\ 
 &  {d}^{\rm a}_i(t) +  {g}^{\rm a}_i(t) + {r}^{\rm a}_i(t)  = a_i(t),  &  \\
 &   {r}^{\rm a}_i(t) + {r}^{\rm b}_i(t) + {r}^{\rm g}_i(t)  = r_i(t),  &  \\ 
 &   {g}^{\rm a}_i(t) + {g}^{\rm b}_i(t) = g^+_i(t) + {s}^+_i(t),  &  \\ 
 &   {d}^{\rm g}_i(t) + {r}^{\rm g}_i(t) = g^-_i(t) + {s}^-_i(t),  &  \\  
 & \sum_{i \in G} {s}^+_i(t) = \sum_{i \in G} {s}^-_i(t)  & \label{eqn:grid-balance}  \\
 \textrm{var. \ } & b_i(t) \ge 0, {d}^{\rm a}_i(t)\ge 0, {d}^{\rm g}_i(t)\ge 0,  \forall t \in [1,T], \forall i \in G \notag \\
 &  {r}^{\rm a}_i(t)\ge 0, {r}^{\rm b}_i(t) \ge, {r}^{\rm g}_i(t) \ge  0,  \forall t \in [1,T], \forall i \in G &\notag \\
 &  {g}^{\rm a}_i(t)\ge 0, {g}^{\rm b}_i(t) \ge 0, {s}^+_i(t) \ge 0, {s}^-_i(t) \ge 0 &    \notag 
\end{align}
\end{figure}

In {\sf (EP)}, there is a per-unit service fee ${\sf C}_{\rm s}$ charged by the grid operator for virtual net metering. Let ${s}^+_i(t)$ and ${s}^-_i(t)$ be the consumption rate and the feed-in rate under virtual net metering for user $i$, respectively. We assume ${\sf C}_{\rm s} < {\sf C}_{\rm g}^-$. Otherwise, there is no need for energy sharing.

{\bf Remarks}: The Service fee ${\sf C}_{\rm s}$ is an important factor to the viability of coalition-based optimization. A high service fee ${\sf C}_{\rm s}$ will deter coalition formation in P2P energy sharing. As long as ${\sf C}_{\rm s}  < {\sf C}_{\rm g}^+  - {\sf C}_{\rm g}^- $, it is viable to transfer PV energy via virtual net metering, because the price difference ${\sf C}_{\rm g}^+  - {\sf C}_{\rm g}^-$ (that represents simultaneous importing and exporting electricity) is still higher than virtual net metering at the cost ${\sf C}_{\rm s}$. Even though ${\sf C}_{\rm s} >  {\sf C}_{\rm g}^-$, when ${\sf C}_{\rm g}^+$ is sufficiently large, we still have ${\sf C}_{\rm s}  < {\sf C}_{\rm g}^+  - {\sf C}_{\rm g}^-$. We assume that all users are energy consumers rather than energy producers. Hence, the standalone cost $C_i$ is always positive for all users.

\section{Outline of Results}

We present the theoretical results of the general coalition formation model. The full proofs can be found in the Appendix.

The cost-based SPoA has been studied in our prior work \cite{CE17sharing}. We can relate utility-based SPoA to cost-based SPoA by Theorem~\ref{thm:poau}.

\begin{customthm}{1} \label{thm:poau}
Consider a budget balanced cost-sharing mechanism $p_{i}(\cdot)$. For the cost-based and utility-based strong prices of anarchy ${\sf SPoA}_K^C$ and ${\sf SPoA}_K^u$, respectively, we have
\begin{equation}
{\sf SPoA}_K^u \ge \frac{K-1}{K-{\sf SPoA}_K^C}; \quad
{\sf SPoA}_K^C \le K-\frac{K-1}{{\sf SPoA}_K^u}
\end{equation}
\end{customthm}

\medskip

There are some known results on the cost-based SPoA: 

\begin{proposition}[\cite{CE17sharing}] \label{prop:eqSPoA}
The SPoA of equal-split and proportional-split cost-sharing mechanisms with cost monotonicity are both logarithmically bounded: ${\sf SPoA}^{C,{\rm eq}}_K = O(\log K)$ and ${\sf SPoA}^{C,{\rm pp}}_K = O(\log K)$.
\end{proposition}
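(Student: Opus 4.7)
The plan is to prove both bounds via a common charging argument: for each coalition $G^*\in\mathcal{P}^*$ of the social optimum, iteratively identify ``pivot'' users whose payments in the stable structure $\hat{\mathcal{P}}$ can be charged against $C(G^*)$ divided by a shrinking denominator, so that the total bill telescopes into a harmonic sum of at most $H_K=O(\log K)$. The existence of such pivots comes from the fact that, since $\hat{\mathcal{P}}$ is stable and every subset $S\subseteq G^*$ satisfies $|S|\le K$, each such $S$ is non-blocking; under equal-split this yields a pivot $i\in S$ with $C(G_i)/|G_i|\le C(S)/|S|$, and under proportional-split a pivot $i\in S$ with $\rho(G_i)\le\rho(S)$, where $\rho(G)\triangleq C(G)/\sigma(G)$, $\sigma(G)\triangleq\sum_{j\in G}C_j$, and $G_i\in\hat{\mathcal{P}}$ denotes the coalition of $i$.

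For equal-split, set $T_1=G^*$ and inductively pick $i_k\in T_k$ with $C(G_{i_k})/|G_{i_k}|\le C(T_k)/|T_k|$, then set $T_{k+1}=T_k\setminus\{i_k\}$. By cost monotonicity $C(T_k)\le C(G^*)$, and $|T_k|=|G^*|-k+1$, so $C(G_{i_k})/|G_{i_k}|\le C(G^*)/(|G^*|-k+1)$. Summing over $k=1,\dots,|G^*|$ gives $\sum_{i\in G^*}C(G_i)/|G_i|\le H_{|G^*|}\,C(G^*)\le H_K\,C(G^*)$, and summing over $G^*\in\mathcal{P}^*$ yields $C(\hat{\mathcal{P}})=\sum_{i\in\mathcal{N}}C(G_i)/|G_i|\le H_K\,C(\mathcal{P}^*)$ because equal-split is budget balanced. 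For proportional-split, the same iterative construction gives $p^{\rm pp}_{i_k}(G_{i_k})=C_{i_k}\rho(G_{i_k})\le C_{i_k}\,C(G^*)/\sigma(T_k)$ by cost monotonicity on the numerator, so the remaining task is to bound $\sum_k C_{i_k}/\sigma(T_k)$ by $H_K$.

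The main obstacle lies precisely in this last step. Unlike equal-split, where $|T_k|$ decreases deterministically by one at each removal, here $\sigma(T_k)$ decreases by the weight $C_{i_k}$ of the chosen pivot, and the pivot is constrained rather than free: if a very heavy user were forced as pivot at an early step, $\sigma(T_k)$ could drop sharply and the sum could be as large as $\Omega(K)$. The workaround is to exploit the flexibility typically available in choosing a pivot: at each step select the pivot with the smallest $C_{i_k}$, and argue via an averaging/exchange argument (e.g.\ by bucketing the members of $G^*$ into $O(\log K)$ weight classes and running the harmonic argument within each class) that the removals can be reordered so that $\sigma(T_{k+1})/\sigma(T_k)\ge 1-\Theta(1/|T_k|)$ on average, yielding $\sum_k C_{i_k}/\sigma(T_k)=O(H_K)$. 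Summing over all $G^*\in\mathcal{P}^*$ then delivers ${\sf SPoA}^{C,{\rm pp}}_K=O(\log K)$, and combined with the equal-split bound above proves the proposition.
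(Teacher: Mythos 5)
Your equal-split argument is correct and is essentially the standard peeling/charging argument (it is the specialization of the paper's Lemma~\ref{lem:common} to $p^{\rm eq}$): stability yields a pivot $i_k\in T_k$ with $C(G_{i_k})/|G_{i_k}|\le C(T_k)/|T_k|\le C(G^*)/(|G^*|-k+1)$, and budget balance turns the resulting harmonic sum into $C(\hat{\mathcal P})\le H_K\,C(\mathcal P^*)$. Note that the paper does not reprove this proposition (it is imported from \cite{CE17sharing}); its appendix only exhibits the general charging lemma and runs the analogous computation for the egalitarian mechanism.

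The proportional-split half has a genuine gap at exactly the step you flag. Two problems. First, stability only guarantees that \emph{some} member of $T_k$ satisfies $\rho(G_{i})\le\rho(T_k)$; you do not get to ``select the pivot with the smallest $C_{i_k}$'', and in the worst case the unique pivot is the heaviest member, so the promised ``averaging/exchange argument'' and ``bucketing into $O(\log K)$ weight classes'' would have to be carried out in full --- as written it is a hope, not a proof, and the sequence $T_1\supset T_2\supset\cdots$ is dictated by whichever pivots stability happens to supply, leaving no room to reorder. Second, the $\Omega(K)$ blow-up you fear actually occurs only when the cost function fails sub-additivity on subsets of $G^*$ (e.g.\ $C(T_k)$ stays equal to $C(G^*)$ while $\sigma(T_k)$ decays geometrically), and the correct fix is structural rather than combinatorial: first replace $C$ by the truncated cost $\tilde C(G)=\min\{C(G),\sum_{j\in G}C_j\}$ and check that this changes neither the stable structures nor the optimum (the analogue of the paper's Lemma~\ref{lem:subadd_ega}). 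With $C(T_k)\le\sigma(T_k)$ in hand, split the sum at the first index $k_0$ with $\sigma(T_{k_0})\le C(G^*)$: for $k\ge k_0$ each term is at most $C_{i_k}$ and these sum to $\sigma(T_{k_0})\le C(G^*)$; for $k<k_0$ each term is at most $C_{i_k}C(G^*)/\sigma(T_k)$, and $\sum_{k<k_0}C_{i_k}/\sigma(T_k)\le 1+\ln\bigl(\sigma(G^*)/C(G^*)\bigr)\le 1+\ln K$ by the telescoping bound $\frac{\sigma_k-\sigma_{k+1}}{\sigma_k}\le\ln\frac{\sigma_k}{\sigma_{k+1}}$ together with $\sigma(G^*)\le K\,C(G^*)$, which follows from monotonicity $C_j=C(\{j\})\le C(G^*)$. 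This gives ${\sf SPoA}^{C,{\rm pp}}_K\le\log K+O(1)$ with no reordering needed. Without the truncation step your charging inequality remains valid but the quantity you are left to bound really can be $\Omega(K)$, so the proof as written does not close.
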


\begin{proposition}[\cite{CE17sharing}] \label{prop:egaSPoA}
The SPoA of egalitarian-split cost-sharing mechanism and Nash bargaining
solution with cost monotonicity is upper bounded by ${\sf SPoA}^{C,{\rm ega}}_K = {\sf SPoA}^{C,{\rm ns}}_K = O(\sqrt{K}\log K)$.
\end{proposition}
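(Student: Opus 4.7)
My plan is to bound the cost ratio $C(\hat{\mathcal{P}})/C(\mathcal{P}^*)$ by exploiting, for each optimum coalition $G^* \in \mathcal{P}^*$, the non-blocking condition against the stable structure $\hat{\mathcal{P}}$. Under egalitarian-split, every member of a coalition $G$ receives the identical utility $u^{\rm ega}_i(G) = (\sum_{j \in G} C_j - C(G))/|G|$, so the fact that $G^*$ is not a blocking coalition of $\hat{\mathcal{P}}$ implies the existence of at least one member $i(G^*) \in G^*$ satisfying $u^{\rm ega}_{i(G^*)}(\hat{G}) \ge u^{\rm ega}_{i(G^*)}(G^*)$, where $\hat{G} \in \hat{\mathcal{P}}$ is the stable coalition containing $i(G^*)$. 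Summing these inequalities over $G^* \in \mathcal{P}^*$, together with the identity $C(\mathcal{P}) = \sum_i C_i - u(\mathcal{P})$ for budget balanced mechanisms, yields the raw comparison between $C(\hat{\mathcal{P}})$ and $C(\mathcal{P}^*)$.

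The delicate issue is double counting: the same stable coalition $\hat{G}$ may be the witness for many $G^* \in \mathcal{P}^*$, potentially up to $|\hat{G}| \le K$ of them, and a naive uniform charging loses a factor of $K$. To bring this down to $O(\sqrt{K}\log K)$, I would refine the argument in two stages. First, partition $\mathcal{P}^*$ by size into ``small'' optimum coalitions with $|G^*| \le \sqrt{K}$ and ``large'' optimum coalitions with $|G^*| > \sqrt{K}$. In the small case, the size bound on each $G^*$ directly caps the witness assignments a single $\hat{G}$ can receive, trading the potential factor of $K$ for a factor of $\sqrt{K}$. In the large case, further bucket the optimum coalitions geometrically by their per-member savings $u^{\rm ega}_i(G^*)$, producing $O(\log K)$ buckets within which all per-member savings are within a constant factor; inside a single bucket, cost monotonicity together with the blocking inequality lets one show that each $\hat{G}$ is charged only $O(1)$ times. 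Combining the two cases gives the claimed $O(\sqrt{K}\log K)$ ratio, and the equivalence $p^{\rm ega}_i(G) = p^{\rm ns}_i(G)$ noted in the preceding text extends the bound to the Nash bargaining solution.

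The hard part is the large-coalition case, because egalitarian-split is \emph{insensitive} to individual standalone costs: a member $i$ with tiny $C_i$ may receive savings far exceeding $C_i$, so $i$ alone could plausibly witness non-blocking for a great many optimum coalitions. The geometric bucketing by savings level, together with the size threshold $\sqrt{K}$ and cost monotonicity, is precisely what is needed to reassemble the individual charges against the global standalone total $\sum_i C_i$ without overcounting. Balancing the two branches of the argument forces the exponent $1/2$ on $K$, while the logarithmic factor arises from the number of geometric buckets, in direct parallel with the $O(\log K)$ argument used for equal-split in Proposition~\ref{prop:eqSPoA}.
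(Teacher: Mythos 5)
There is a genuine gap, and it sits at the very first step: your argument runs through \emph{utilities}, but the quantity to be bounded is the \emph{cost} ratio $C(\hat{\mathcal P})/C(\mathcal P^\ast)$, and the translation between the two is too lossy to yield anything sublinear in $K$. Concretely, the identity $C(\mathcal P)=\sum_i C_i-u(\mathcal P)$ together with $\sum_i C_i\le K\,C(\mathcal P^\ast)$ (Lemma~\ref{thm:gen}) shows that a utility guarantee $u(\hat{\mathcal P})\ge u(\mathcal P^\ast)/\beta$ translates only into $C(\hat{\mathcal P})/C(\mathcal P^\ast)\le K-(K-1)/\beta$ (this is exactly Theorem~\ref{thm:poau}); for any $\beta$ bounded away from $1$ this is $\Theta(K)$, not $O(\sqrt{K}\log K)$. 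Your witness scheme cannot possibly deliver $\beta=1+O(\log K/\sqrt K)$: the non-blocking condition certifies only that one member of $G^\ast$ does at least as well per-capita in $\hat{\mathcal P}$, so summing over $G^\ast$ loses a factor of $|G^\ast|$ (up to $K$) before any double-counting is even addressed. There are also local problems with the charging itself --- e.g.\ the number of optimum coalitions that a fixed $\hat G$ can witness is capped by $|\hat G|$ (one per member, since $\mathcal P^\ast$ is a partition), not by $|G^\ast|$, so the ``small case'' of your size split does not cap anything --- but the utility-to-cost conversion is the step that cannot be repaired within this framework.

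The actual argument (this proposition is quoted from \cite{CE17sharing}; the present paper only sharpens it to $O(\log K)$ in Theorem~\ref{thm:egaSPoA}) works directly with \emph{payments} and uses a peeling iteration rather than a single witness per optimal coalition: for each $G\in\mathcal P^\ast$ one repeatedly applies non-blocking to the shrinking sub-coalitions $H_s=G\setminus\{i_1,\dots,i_{s-1}\}$ (each still feasible) to extract, for \emph{every} user $i_s$, the inequality $p_{i_s}(\hat G(i_s))\le p_{i_s}(H_s)$; summing these covers all of $C(\hat{\mathcal P})$ by budget balance, and the resulting ratio $\sum_s p^{\rm ega}_{i_s}(H_s)/C(G)$ is bounded by an explicit optimization (Lemmas~\ref{lem:common} and~\ref{lem:nashstb2}). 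The $\sqrt K$ in the $O(\sqrt K\log K)$ bound has nothing to do with a size threshold on optimal coalitions: it is the price paid in \cite{CE17sharing} to enforce non-negativity of the egalitarian payments on arbitrary coalitions (egalitarian-split can pay members negative amounts), which the peeling chain requires; the present paper removes that factor by observing (Lemma~\ref{lem:nashpositive}) that non-negativity is automatic on coalitions of a stable structure.
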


In this paper, we are able to improve the bound of SPoA of egalitarian-split cost-sharing mechanism by Theorem~\ref{thm:egaSPoA}.

\iffalse
\begin{customthm}{2}  \label{thm:ppSPoA}
Consider proportional-split cost-sharing mechanism with cost monotonicity. When $K \le 6$, the SPoA is upper bounded by
\begin{equation}
{\sf SPoA}^{C,{\rm pp}}_K \le \sum_{s=1}^K \frac{1}{s}
\end{equation} 
For general $K$, the SPoA is upper bounded by
\begin{equation}
{\sf SPoA}^{C,{\rm pp}}_K \le \log K + 2
\end{equation} 
\end{customthm}
\fi

\begin{customthm}{2}  \label{thm:egaSPoA}
For egalitarian-split cost-sharing mechanism and Nash bargaining solution with cost monotonicity, the SPoA is upper bounded by
\begin{equation}
{\sf SPoA}^{C,{\rm ega}}_K  ={\sf SPoA}^{C,\rm ns}_K= O(\log K)
\end{equation} 
\end{customthm}

{\bf Remarks:} Hence, we can conclude that equal-split, proportional-split and egalitarian-split cost-sharing mechanisms and Nash bargaining solution share the same logarithmic order of magnitude in SPoA. Let $\hat{\mathcal P}^{\rm eq}, \hat{\mathcal P}^{\rm pp}, \hat{\mathcal P}^{\rm ega}$ be the stable coalition structures for equal-split, proportional-split and egalitarian-split cost-sharing mechanisms, respectively. Because of the benchmark against the social optimum in Propositions~\ref{prop:eqSPoA}-\ref{prop:egaSPoA} and Theorem~\ref{thm:egaSPoA}, we can also bound the worst-case ratio of stable coalition structures among different cost-sharing mechanisms as follows:  
\begin{align}
\max_{\hat{\mathcal P}^{\rm eq}, \hat{\mathcal P}^{\rm pp}, C(\cdot)}\max\Big\{ \frac{C(\hat{\mathcal P}^{\rm pp})}{C(\hat{\mathcal P}^{\rm eq})}, \frac{C(\hat{\mathcal P}^{\rm eq})}{C(\hat{\mathcal P}^{\rm pp})} \Big\} = O(\log K) \notag \\
\max_{\hat{\mathcal P}^{\rm pp}, \hat{\mathcal P}^{\rm ega}, C(\cdot)}\max\Big\{\frac{C(\hat{\mathcal P}^{\rm pp})}{C(\hat{\mathcal P}^{\rm ega})},\frac{C(\hat{\mathcal P}^{\rm ega})}{C(\hat{\mathcal P}^{\rm pp})} \Big\} = O(\log K) \notag \\
\max_{\hat{\mathcal P}^{\rm ega}, \hat{\mathcal P}^{\rm eq}, C(\cdot)}\max\Big\{\frac{C(\hat{\mathcal P}^{\rm ega})}{C(\hat{\mathcal P}^{\rm eq})},\frac{C(\hat{\mathcal P}^{\rm eq})}{C(\hat{\mathcal P}^{\rm ega})} \Big\} = O(\log K) \notag 
\end{align} 
Namely, the costs of these cost-sharing mechanisms are not far from each other.

\medskip

We next provide a logarithmic lower bound on SPoA for practical polynomial-time mechanisms in Theorem~\ref{thm:lowerbound}. Hence, it shows that equal-split, proportional-split and egalitarian-split cost-sharing mechanisms and Nash bargaining solution are the best practical mechanisms with minimal SPoA.

\begin{customthm}{3}  \label{thm:lowerbound}
For any cost-sharing mechanism, such that a stable coalition structure can be found in polynomial-time, its SPoA is lower bounded by $\Omega(\log K)$, under the assumption that P$\neq$NP.
\end{customthm}

\medskip

We can also derive the SPoA for a mixed cost-sharing mechanism as the the upper bound of the SPoA of its constituent pure cost-sharing mechanisms in Theorem~\ref{thm:mixed}.

\begin{customthm}{4}  \label{thm:mixed}
For a mixed cost-sharing mechanism of a constant number of constituent pure cost-sharing mechanisms, if the SPoA of each constituent pure cost-sharing mechanism is $O(f(K))$, then the SPoA of the mixed cost-sharing mechanism is also $O(f(K))$.
\end{customthm}

\section{Decentralized Coalition Formation Algorithm} 

In this section, we present a decentralized coalition formation algorithm. In our previous paper \cite{CE17sharing}, we have presented a centralized coalition formation algorithm, which is based on sequential steps of pruning the transition graph of states, which can be very slow in practice if there are many states in coalition formation. This decentralized coalition formation process is based on the classical deferred-acceptance algorithms (e.g., the Gale-Shapley algorithm for the stable marriage problem and the Irving algorithm for the stable roommates problem). We extend the deferred-acceptance algorithm to coalition formation with more than 2 participants per coalition. The deferred-acceptance algorithm allows parallelization, such that multiple users can propose coalitions to each other simultaneously. 

We first define some notations. For each participant $i \in {\mathcal N}$, let ${\mathcal G}_i \triangleq \{G  \subseteq {\mathcal N} :  i \in G \mbox{\ and\ } G \in {\mathcal P} \in {\mathscr P}^K\}$ be the set of {\it feasible} coalitions that include participant $i$. Note that ${\mathcal G}_i$ also includes the standalone coalition $\{i\}$. Each participant $i$ has a preference over ${\mathcal G}_i$, defined by $u_i(\cdot)$. Denote participant $i$'s preference by an ordered sequence $\mbox{\sc Pref}_i = (G_1, G_2, .., G_t, ..., \{i\})$ such that $u_i(G_{t-1}) \ge u_i(G_{t})$.  For brevity, we only consider the coalition $G$ in $i$'s preference where $u_i(G) > 0 = u_i(\{i\})$. 
%We assume that $\mbox{\sc Pref}_i$ can be ordered by a strict preferential order $\succ_i$, such that $G_{t-1} \succ_i G_{t}$\footnote{A strict order can be attained by setting a deterministic tie-breaking for any pair $u_i(G_{t-1}) = u_i(G_{t})$ that is consistent among all participants. Namely, if $G_{t-1} \succ_i G_{t}$, then $G_{t-1} \succ_j G_{t}$ for any other $j \in G_{t-1} \cap G_{t}$.}. 
We define two operations that enumerate the preference $\mbox{\sc Pref}_i$: 
\begin{enumerate}

\item $\mbox{\sc Top}(\mbox{\sc Pref}_i)$ returns the topmost preferred coalition from $\mbox{\sc Pref}_i$.

\item $\mbox{\sc Remove}(\mbox{\sc Pref}_i, G)$ removes $G$ from $\mbox{\sc Pref}_i$.

\end{enumerate}

During the process of decentralized coalition formation, each participant $i$ carries a tuple of variables:
$$\big\langle \mbox{\sc Props}_i, {\sf H}_i, \mbox{\sc Suspend}_i \big\rangle$$
\begin{itemize}

\item
 $\mbox{\sc Props}_i \subseteq {\mathcal G}_i$ is a set of proposed coalitions received by $i$. For each $G \in \mbox{\sc Props}_i$, we define a function $\mbox{\sc Proposer}(G) \mapsto G$ that indicates its proposer.

\item
 ${\sf H}_i\in {\mathcal G}_i$ is a coalition that is currently held for consideration by $i$.
 
 \item
 $\mbox{\sc Suspend}_i$ is a boolean variable, indicating if $i$ is suspended from proposing the next-top preferred coalition to other participants.
 
 \end{itemize}
%Denote by $\mbox{\sc Top}(\mbox{\sc Props}_i)$ the top-most preferred item in $\mbox{\sc Props}_i$. 

Initially, set ${\sf H}_i \leftarrow  \varnothing$ and $\mbox{\sc Suspend}_i \leftarrow  \mbox{\sc False}$, $\mbox{\sc Props}_i \leftarrow \varnothing$. We define a decentralized process, {\sc Coln{-}Form}, consisting of multiple rounds with three stages per each round as follows:
\begin{enumerate}

\item {\bf Proposing Stage:}
First, if participant $i$ is not suspended (i.e., $\mbox{\sc Suspend}_i =\mbox{\sc False}$), then $i$ will propose the next topmost preferred coalition $G = \mbox{\sc Top}(\mbox{\sc Pref}_i)$, when $G$ is better than ${\sf H}_i$, the current one held for consideration (i.e., $u_i(G) > u_i({\sf H}_i)$). Otherwise, $i$ is paused from proposing. Once $i$ has proposed $G$ to all $j \in G\backslash\{i\}$, $G$ is removed from the preference by $\mbox{\sc Remove}(\mbox{\sc Pref}_i, G)$.

\item {\bf Evaluation Stage:}
Then, each participant $j$ collects the set of proposed coalitions $\mbox{\sc Props}_j$ that were received from the proposing stage. If any proposed coalition $G \in \mbox{\sc Props}_j$ is not better than the current one held for consideration (i.e., $u_i({\sf H}_i) > u_i(G)$), then $G$ will be rejected, and $\mbox{\sc Proposer}(G)$ and other members in $G$ will be notified. The rejected coalition $G$ will be removed from the members' proposed coalition sets $\mbox{\sc Props}_k$ for all $k \in G$.

\item {\bf Selection Stage:}
The remaining proposed coalitions in $\mbox{\sc Props}_j$ of each $j$ are not rejected by any members and are better than the ones currently held for consideration. Next, each participant $j$ picks the topmost preferred\footnote{We assume that there is a deterministic tie-breaking for each participants, such that when a pair of coalitions are ranked equally by two participants, they will always carry out tie-breaking in a consistent manner, for example, using the same tie-breaking rule.} coalition $H$ from $\mbox{\sc Props}_j$ and notifies the other members in $H$. If all members $k \in H$ (except $\mbox{\sc Proposer}(H)$) also pick $H$ as their most preferred coalition from the respective lists of proposed coalitions $\mbox{\sc Props}_k$, then $H$ will be held for consideration by all members $H$, and the previous held coalition ${\sf H}_k$ will be replaced by $H$. The members in the previous held coalition $\ell \in {\sf H}_k$ will be notified and their current held coalition will be replaced by $\varnothing$. The previous proposer $\mbox{\sc Proposer}({\sf H}_k)$ will resume to propose in the next round (i.e., $\mbox{\sc Suspend}_i \leftarrow  \mbox{\sc False}$). But the proposer $\mbox{\sc Proposer}(H)$ will be suspended from proposing (i.e., $\mbox{\sc Suspend}_i \leftarrow  \mbox{\sc True}$).

\item {\bf Termination Stage:}
If some participants are not in any coalition held for consideration, then the process proceeds to the next round and repeats the proposing, evaluation and selection stages. Otherwise, the coalition formation process will terminate.

 \end{enumerate}

\subsection{Convergence to Stable Coalition Structure}

We define a {\em cyclic preference} as sequences $(i_1,..., i_t)$ and $(G_1,..., G_t)$, where $i_k \in G_k\cap G_{k+1}$ for all $k \le t-1$, and $i_t \in G_t\cap G_1$, such that
\begin{eqnarray*}
u_{i_1}(G_2) & > & u_{i_1}(G_1), \\
u_{i_2}(G_3) & > & u_{i_2}(G_2), \\
& \vdots & \notag \\
u_{i_t}(G_1) & > & u_{i_t}(G_t)
\end{eqnarray*}

\begin{proposition}[\cite{CE17sharing}] \label{prop:nocyc}
There exists no cyclic preference under equal-split, proportional-split and egalitarian-split cost-sharing mechanisms, and Nash bargaining solution.
\end{proposition}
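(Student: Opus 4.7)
The plan is to observe that for each of the four cost-sharing mechanisms listed, the utility ranking over coalitions is \emph{user-independent}: there exists a single scalar valuation $\phi(G)$ on coalitions such that, for any user $i$ belonging to both $G$ and $G'$, the comparison $u_i(G) > u_i(G')$ holds if and only if $\phi(G) > \phi(G')$. Once this is established, ruling out cyclic preferences is immediate by chaining inequalities.

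First I would read off an explicit valuation function from the utility formulas given earlier in the paper. For equal-split, $u_i^{\rm eq}(G) = C_i - C(G)/|G|$, so the relative ranking depends only on $\phi^{\rm eq}(G) \triangleq -C(G)/|G|$. For proportional-split, $u_i^{\rm pp}(G) = C_i \cdot \big(1 - C(G)/\sum_{j\in G} C_j\big)$; after dividing out the positive factor $C_i$, the preference is governed by $\phi^{\rm pp}(G) \triangleq 1 - C(G)/\sum_{j \in G} C_j$. For egalitarian-split, the utility is itself user-independent: $\phi^{\rm ega}(G) \triangleq \big(\sum_{j \in G} C_j - C(G)\big)/|G| = u_i^{\rm ega}(G)$. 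For the Nash bargaining solution, the equivalence $p_i^{\rm ns}(G) = p_i^{\rm ega}(G)$ already recorded in the paper reduces the case to the egalitarian one and shows both mechanisms share the same $\phi$.

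With these $\phi$'s in hand, I would argue by contradiction. Suppose there were a cyclic preference with sequences $(i_1,\dots,i_t)$ and $(G_1,\dots,G_t)$. Since each $i_k$ lies in both $G_k$ and $G_{k+1}$ (indices mod $t$), the user-independence translates every strict inequality $u_{i_k}(G_{k+1}) > u_{i_k}(G_k)$ into $\phi(G_{k+1}) > \phi(G_k)$. Chaining around the cycle yields
\begin{equation}
\phi(G_1) < \phi(G_2) < \cdots < \phi(G_t) < \phi(G_1),
\end{equation}
which is impossible, completing the proof.

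The main obstacle is essentially not there, since all four mechanisms reduce to a user-independent valuation. The one subtlety worth handling is the proportional-split boundary case where some $C_i = 0$: then $u_i^{\rm pp}(\cdot) \equiv 0$, so such an $i$ cannot participate as some $i_k$ in a \emph{strict} cyclic preference, and the user-independent reduction goes through for exactly the users that can appear in the cycle.
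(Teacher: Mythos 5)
Your proof is correct. The paper itself does not reprove this proposition---it is imported from \cite{CE17sharing}---but your argument is exactly the standard one underlying that result: each of the four mechanisms satisfies a common-ranking property, i.e., there is a single user-independent potential $\phi(G)$ (namely $-C(G)/|G|$ for equal-split, $1 - C(G)/\sum_{j\in G}C_j$ for proportional-split, and the common utility $\big(\sum_{j\in G}C_j - C(G)\big)/|G|$ for egalitarian-split and, via $p^{\rm ns}_i = p^{\rm ega}_i$, for Nash bargaining) whose strict ordering agrees with every member's strict preference, so a cycle would force $\phi(G_1) < \cdots < \phi(G_t) < \phi(G_1)$. Your handling of the proportional-split degenerate case $C_i = 0$ (such a user has identically zero utility and hence cannot supply a strict inequality in the cycle) is the right way to close the only loose end.
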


\begin{customthm}{5}  \label{thm:converge}
If there exists no cyclic preference, then {\sc Coln{-}Form} will converge to a stable coalition structure in time $O(n^K)$.
\end{customthm}

By Proposition~\ref{prop:nocyc} and Theorem~\ref{thm:converge}, {\sc Coln{-}Form} will converge to a stable coalition structure under equal-split, proportional-split and egalitarian-split cost-sharing mechanisms, and Nash bargaining solution. If {\sc Coln{-}Form} is executed sequentially when one participant follows another, then the running time is $O(n^K)$. However, {\sc Coln{-}Form} can also be executed in parallel among the participants, and the actual running time is less than $O(n^K)$.

\begin{figure*}[htb!] 
        \centering\includegraphics[width=0.85\textwidth]{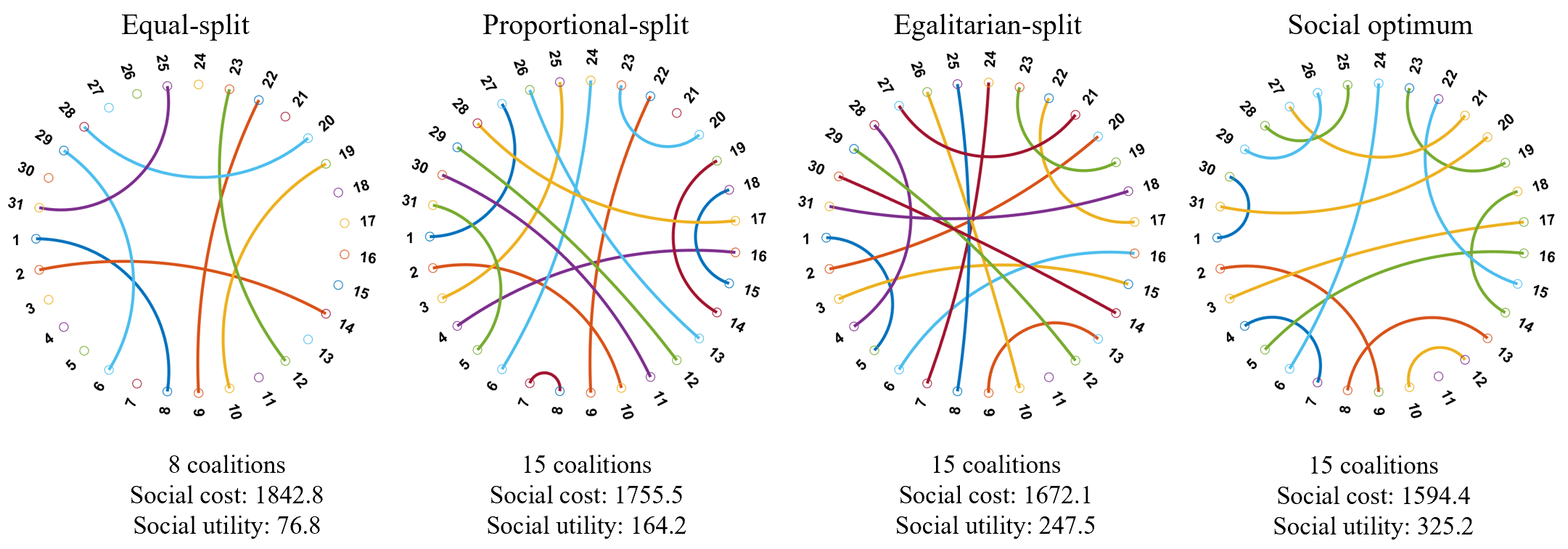}  %\vspace{-20pt} 
        \caption{Coalition structures (when $K=2$) under different cost-sharing mechanisms and social optimum.}
        \label{fig:2coalitiongraph}
 \end{figure*}

\begin{figure*}[htb!] 
        \centering\includegraphics[width=0.85\textwidth]{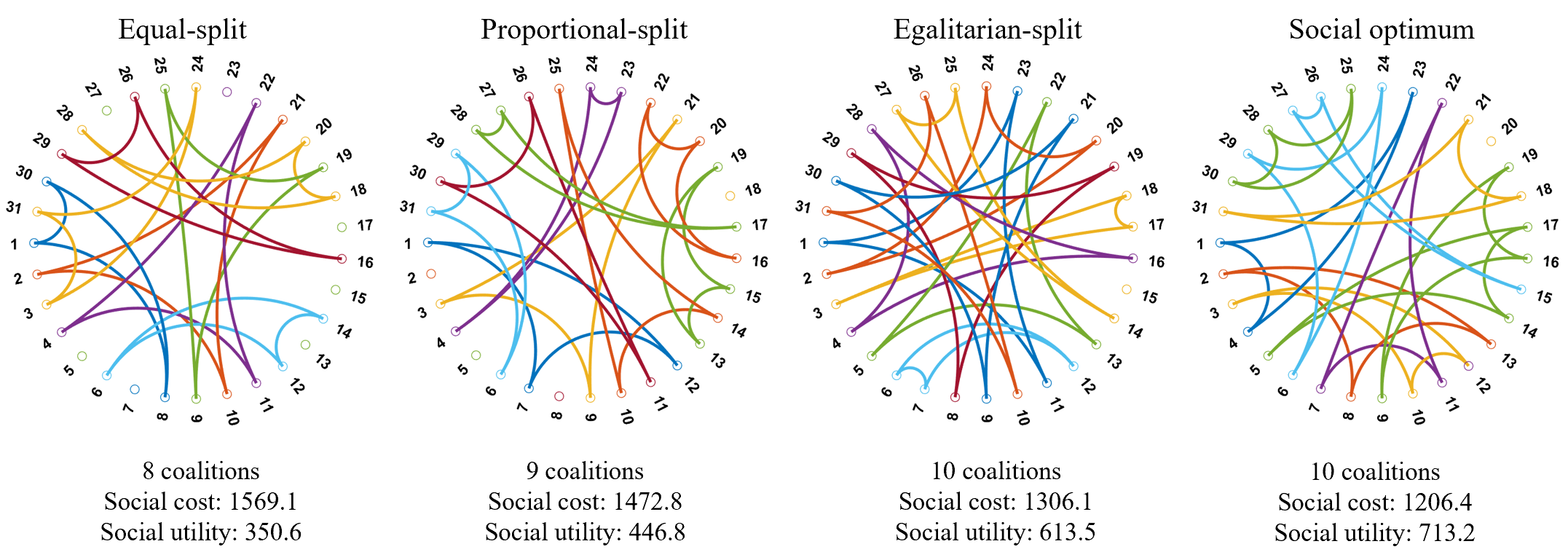}  %\vspace{-20pt} 
        \caption{Coalition structures (when $K=3$) under different cost-sharing mechanisms and social optimum.}
        \label{fig:3coalitiongraph}
 \end{figure*}

\section{Evaluation Study}  \label{sec:eval}

We present an empirical study of decentralized coalition formation in a real-world P2P energy sharing project. A field trial has been conducted on Bruny Island, Tasmania in Australia \cite{SGFJT19bruny}, where approximately 31 batteries were installed with solar PV systems at the homes of selected residents of the island. The batteries include customized management software that allows programmable control of the battery management operations. The participating households provide energy data for empirical studies. Next, we evaluate the outcomes of coalition formation under different cost-sharing mechanisms with 31 households. The evaluation is based on the default parameters in Table~\ref{tab:default_parameters}. 

\begin{table}[htb!]
\centering
	\begin{tabular}{l|r}
		\hline\hline
		Battery capacity (${\sf B}_i,$) & 9.8~kWh \\ \hline
		Consumption tariff (${\sf C}_{\rm g}^+$) & \$0.20/kWh \\ \hline
		Feed-in tariff  (${\sf C}_{\rm g}^-$) & \$0.10/kWh \\ \hline
		Settlement service fee (${\sf C}_{\rm s}$) & \$0.00/kWh \\ \hline
		Charging efficiency ($\eta_{\rm c}$) & 0.95 \\ \hline
		Discharging efficiency ($\eta_{\rm d}$) & 1.05 \\ \hline
		Charge rate ($\mu_{\rm c}$) & 5 \\ \hline
		Discharge rate ($\mu_{\rm d}$) & 5 \\ \hline\hline
	\end{tabular}  
	\caption{Default parameters used in evaluation.} 
	\label{tab:default_parameters} 
\end{table}

The coalition structures for $K=2$ and 3 are visualized in Figs.~\ref{fig:2coalitiongraph}-\ref{fig:3coalitiongraph}. The mechanisms successfully form 8-15 non-singleton coalitions (when $K=2$) and 8-10 non-singleton coalitions (when $K=3$). When $K=2$,  the social optimum has social cost \$1594.4 and social utility \$325.2, whereas the egalitarian-split cost-sharing has social cost \$1672.1 and social utility \$247.5. When $K=3$, the social optimum has social cost \$1206.4 and social utility \$713.2, whereas the egalitarian-split cost-sharing has social cost  \$1306.1 and social utility \$613.5. In both cases, the egalitarian-split cost-sharing is very close to the social optimum. Also, equal-split cost-sharing gives the most social cost and the least social utility among the three cost-sharing  mechanisms.

\begin{figure*}[ht!] 
        \centering\includegraphics[width=0.48\textwidth]{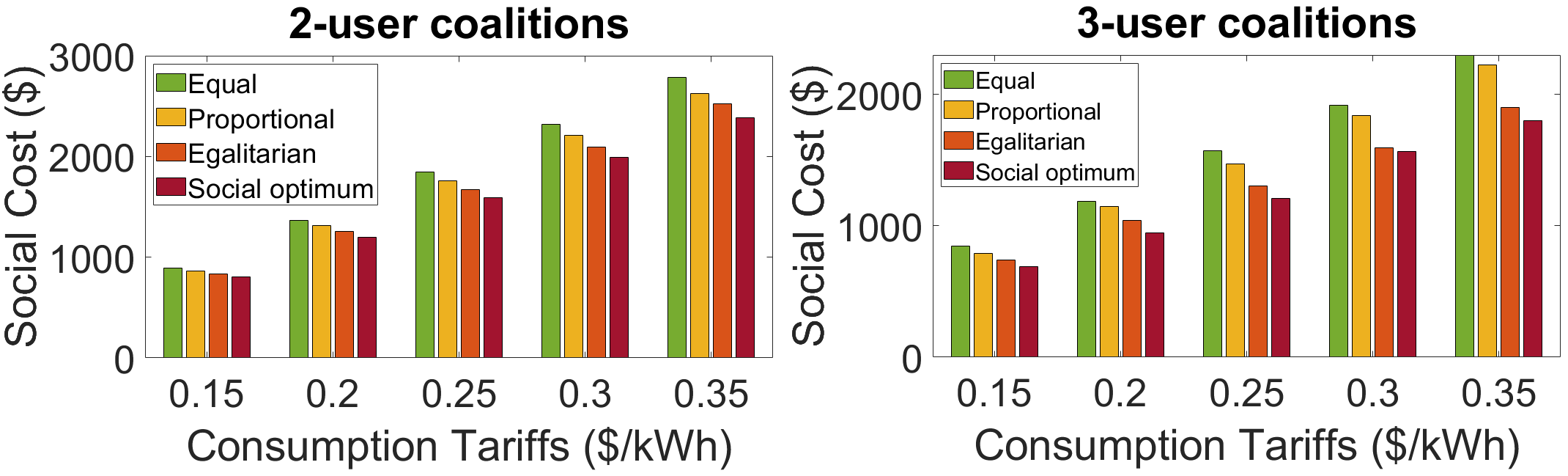} \ \includegraphics[width=0.48\textwidth]{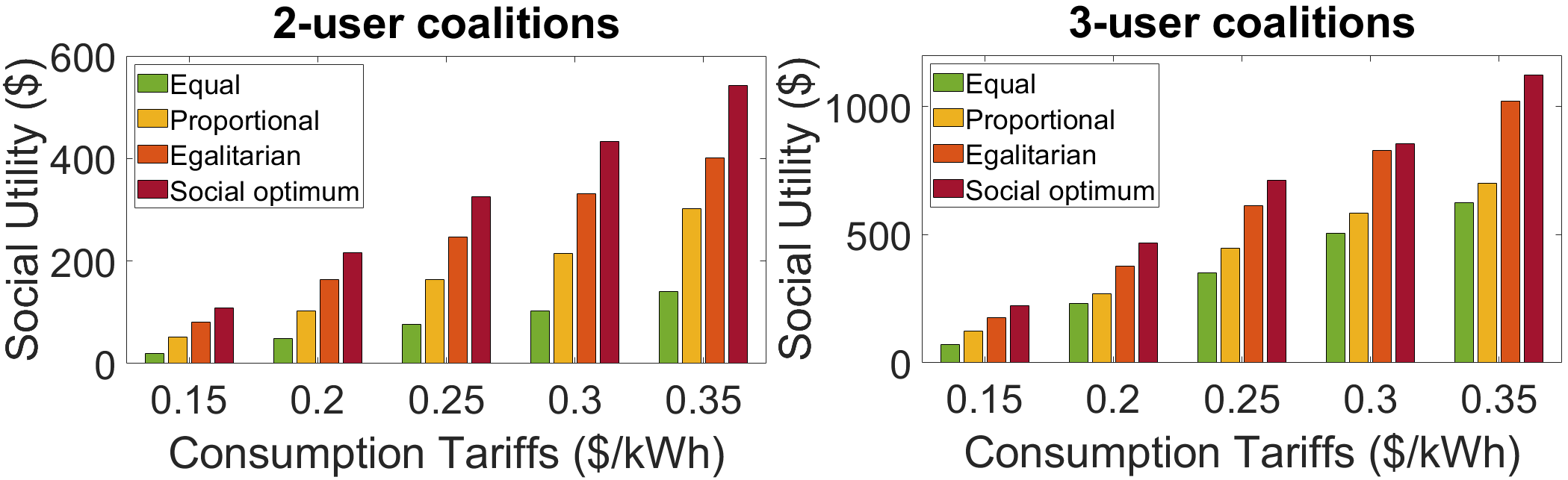} %\vspace{-20pt} 
        \caption{Evaluation results under different settings of consumption tariffs.}
        \label{fig:eval_consumption}
 \end{figure*}

\begin{figure*}[ht!] 
        \centering\includegraphics[width=0.48\textwidth]{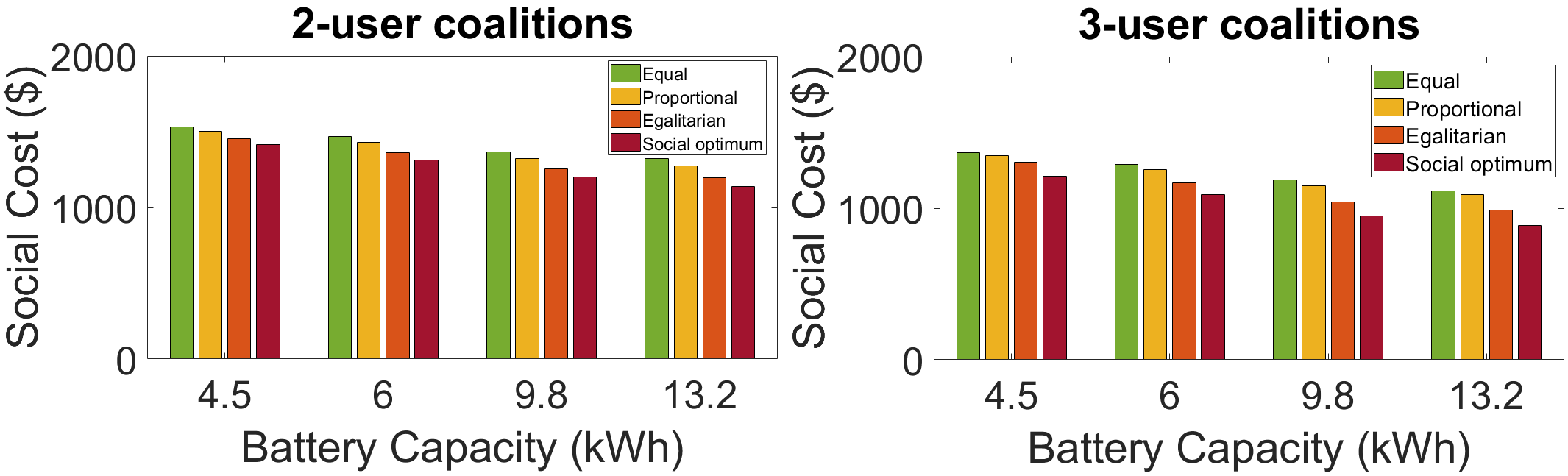} \ \includegraphics[width=0.48\textwidth]{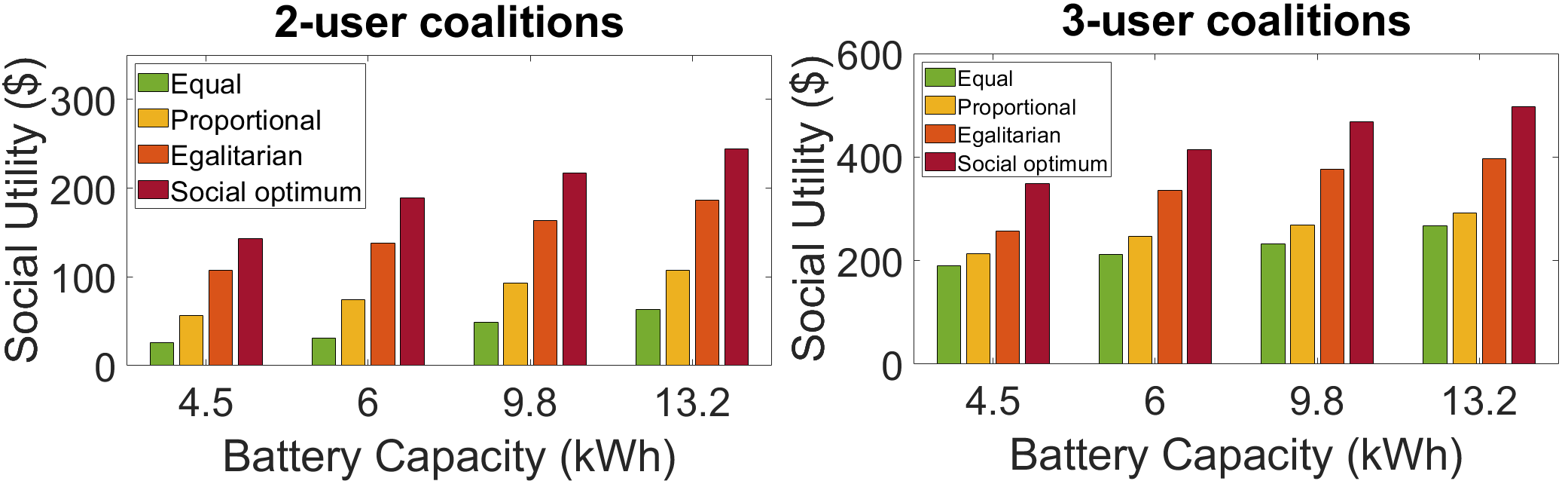} %\vspace{-20pt} 
        \caption{Evaluation results under different settings of battery capacities.}
        \label{fig:eval_battery}
 \end{figure*}

\begin{figure*}[ht!] 
        \centering\includegraphics[width=0.48\textwidth]{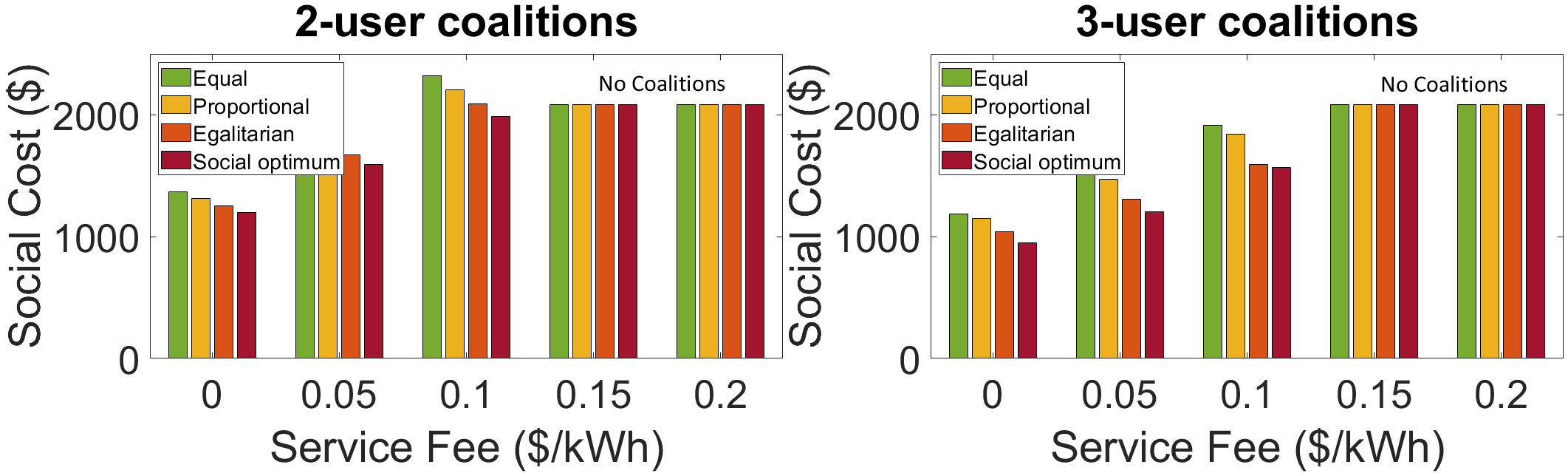} \ \includegraphics[width=0.48\textwidth]{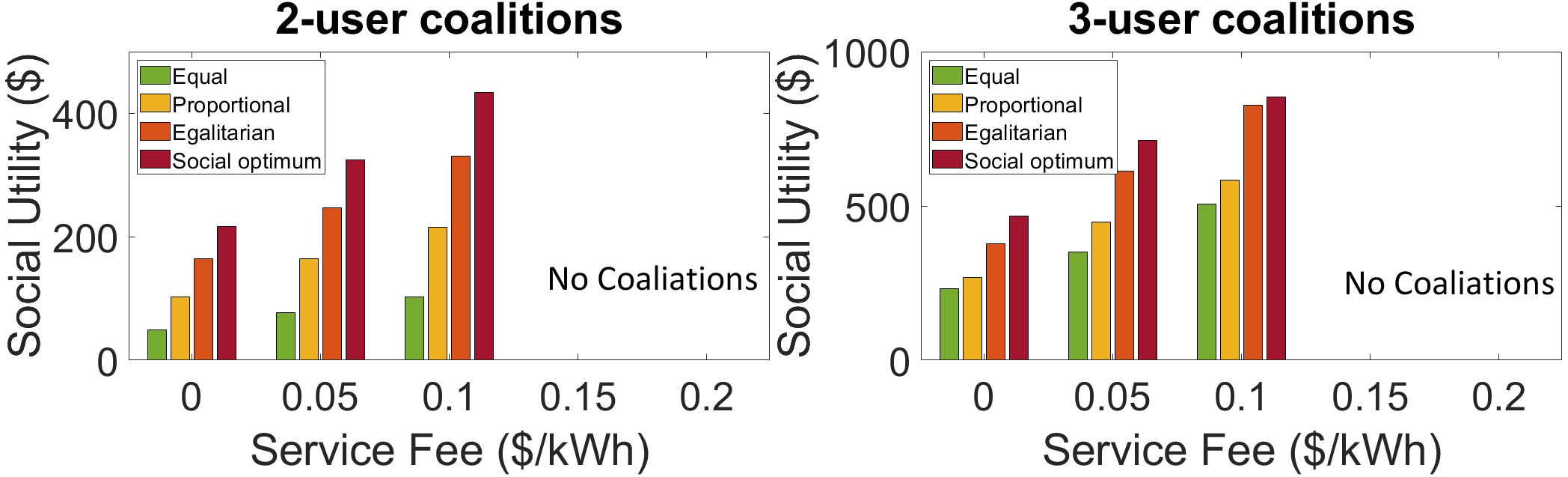}
        \caption{Evaluation results under different settings of service fees.}
        \label{fig:eval_service}
 \end{figure*}

Although the theoretical SPoA are in the same logarithmic order of magnitude for equal-split, proportional-split and egalitarian-split cost-sharing mechanisms, there are observable differences in empirical evaluation. From Figs.~\ref{fig:eval_consumption}-\ref{fig:eval_service}, egalitarian-split cost-sharing is observed to have the closest approximation to the social optimum, whereas equal-split cost-sharing is the furthest from the social optimum,  among the three cost-sharing mechanisms. Overall, we observe the measured SPoA is within $95\%$ of the social optimal cost with coalitions of 2 and 3 users, via three fair cost-sharing mechanisms. Next, we vary the parameters to compare the outcomes of different cost-sharing mechanisms.
 
\subsection{Consumption Tariffs}

The consumption tariff is the cost per kWh of energy imported from the grid. The results in Fig.~\ref{fig:eval_consumption} suggest an increase in the consumption tariff causes an increase in the social cost. Nonetheless, the relative differences among these cost-sharing mechanisms are preserved under different consumption tariffs. 

\subsection{Battery Capacities}

Different battery capacities were considered to assess their impact on decentralized coalition formation. The results in Fig.~\ref{fig:eval_battery} suggest that larger batteries can result in greater social utilities.  The largest and smallest batteries have the capacities of 13.2~kWh and 4.5~kWh, respectively. Despite the larger battery having almost three times the capacity, it has less than twice the utility. This suggests that battery capacity has a lesser impact than consumption tariff. Also, the relative differences among the cost-sharing mechanisms are preserved under different battery capacities.

\subsection{Service Fees}

Different service fees also affect the outcomes of coalition formation. We observe that there is no decentralized coalition formed if the  service fee  ${\sf C}_{\rm s}$ is higher than $ {\sf C}_{\rm g}^+  - {\sf C}_{\rm g}^-$.  The results in Fig.~\ref{fig:eval_service} suggest that when ${\sf C}_{\rm s}$ is greater than $\$0.15$/kWh, there is no decentralized coalition formed -- the users would rather have standalone energy management without sharing energy with other users.

\subsection{Performance of Decentralized Coalition Formation Algorithm}

In this section, we compare the performance of the decentralized algorithm for coalition formation and the centralized algorithm in \cite{CE17sharing}.  We observe that the running time of the decentralized one is much faster in practice in Fig.~\ref{fig:runtime}. The result shows a significant improvement of running time over the centralized algorithm as the number of users increased, on the contrary, the running time of the decentralized algorithm rises moderately. The decentralized algorithm is up to 2.7 times faster than the centralized algorithm with 10 users to around 8000 times faster with 16 users. The running time of the decentralized algorithm has a significant advantage when the number of users reaches more than 12.

 \begin{figure}[h!] 
        \centering \includegraphics[width=0.48\textwidth]{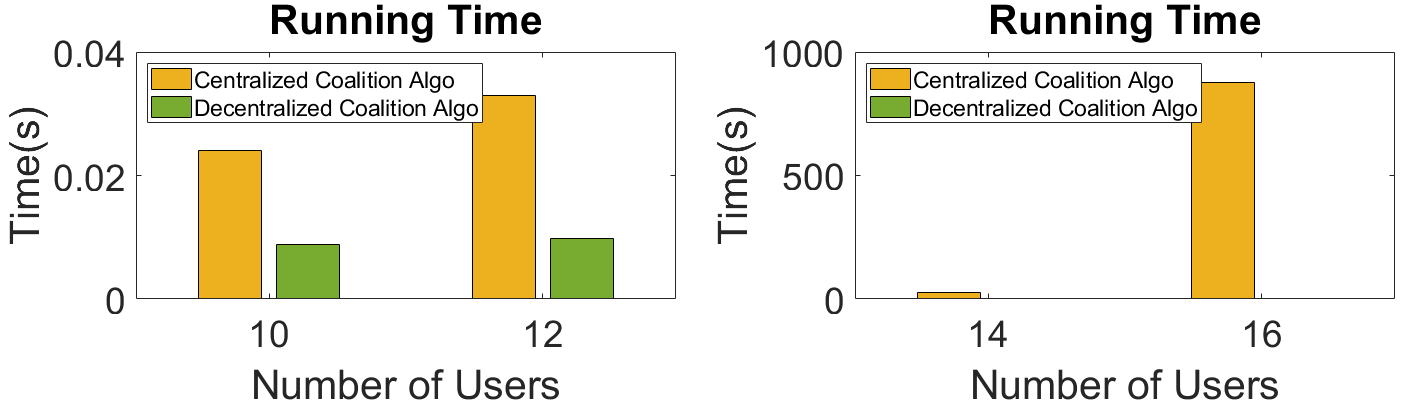}  
        \caption{Compare the running time of centralized and decentralized coalition formation algorithms under different numbers of users.}
        \label{fig:runtime}
    \end{figure}

\section{Conclusion}  \label{sec:conclude}

In this paper, we studied decentralized coalition formation under certain cost-sharing mechanisms (e.g., equal-split, proportional-split, egalitarian and Nash bargaining solutions of bargaining games). We applied our decentralized coalition formation results to the application of P2P energy sharing, being corroborated by an empirical study. We provide several new results of decentralized coalition formation: (1) We established a logarithmic lower bound on SPoA, and hence, showed several previously known fair cost-sharing mechanisms are the best practical mechanisms with minimal SPoA. (2) We improved the SPoA of egalitarian and Nash bargaining cost-sharing mechanisms to match the lower bound. (3) We derived the SPoA of a mix of different cost-sharing mechanisms. (4) We presented a decentralized algorithm to form a stable coalition structure. In the future work, we will implement the decentralized coalition formation in decentralized blockchain framework to enable transparent and verifiable computation of the cost-sharing mechanisms. Note that we recently applied the cost-sharing mechanisms in this work to privacy-preserving energy storage sharing with blockchain \cite{chau21blockchain,chau22blockchain} and decentralized group purchasing for retail energy plans \cite{chau22grouppur, chau21grouppur} . Furthermore, the current energy sharing model does not consider the real-time power flow constraints (e.g., voltage constraints in the grid). It will be considered in a more advanced model that incorporates realistic power flow constraints.

\bibliographystyle{ACM-Reference-Format}
\bibliography{reference,reference2}

\appendix
\section*{Appendix}

\section{Proofs of Theoretical Results} 

\subsection{Preliminaries}

\begin{lemma}[\cite{CE17sharing}]  \label{thm:gen} 
Recall that ${\mathcal P}_{\rm self} \triangleq \big\{\{i\} : i \in {\mathcal N} \big\}$. A social optimum is denoted by ${\mathcal P}^\ast = \arg\max_{{\mathcal P} \in {\mathscr P}_K} u({\mathcal P})$. Then we obtain $K \cdot C({\mathcal P}^\ast) \ge C({\mathcal P}_{\rm self})$.
A stable coalition structure is denoted by $\hat{\mathcal P} \in {\mathscr P}_K$.
Consider a budget balanced cost-sharing mechanism $p_{i}(\cdot)$. Then we obtain
$C({\mathcal P}_{\rm self}) \ge C(\hat{\mathcal P})$.
Hence, the cost-based strong price of anarchy ${\sf SPoA}_K^C \le K$.
\end{lemma}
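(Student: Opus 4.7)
The plan is to prove the two inequalities separately and then combine. The first inequality $K \cdot C({\mathcal P}^\ast) \ge C({\mathcal P}_{\rm self})$ should follow directly from cost monotonicity together with the size bound $|G| \le K$. Specifically, for each coalition $G \in {\mathcal P}^\ast$ and each $i \in G$, monotonicity gives $C_i = C(\{i\}) \le C(G)$, so summing over $i \in G$ yields $\sum_{i \in G} C_i \le |G| \cdot C(G) \le K \cdot C(G)$. Summing over all $G \in {\mathcal P}^\ast$ (which partition ${\mathcal N}$) gives $C({\mathcal P}_{\rm self}) = \sum_{i \in {\mathcal N}} C_i \le K \sum_{G \in {\mathcal P}^\ast} C(G) = K \cdot C({\mathcal P}^\ast)$.

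For the second inequality $C({\mathcal P}_{\rm self}) \ge C(\hat{\mathcal P})$, I would exploit the stability of $\hat{\mathcal P}$ to argue that every participant's utility is nonnegative. Specifically, for any $i \in G$ with $G \in \hat{\mathcal P}$, if $u_i(G) < 0$, then since $u_i(\{i\}) = 0 > u_i(G)$, the singleton $\{i\}$ (which has size $1 \le K$, so it is a valid coalition in ${\mathscr P}^K$) would be a blocking coalition with respect to $\hat{\mathcal P}$, contradicting stability. Hence $u_i(G) \ge 0$, and by Eqn.~\raf{eqn:uipi}, $p_i(G) \le C_i$. Invoking budget balance gives $C(G) = \sum_{i \in G} p_i(G) \le \sum_{i \in G} C_i$. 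Summing over $G \in \hat{\mathcal P}$ and using the partition property yields $C(\hat{\mathcal P}) \le \sum_{i \in {\mathcal N}} C_i = C({\mathcal P}_{\rm self})$.

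Finally, chaining the two inequalities, for any instance of $C(\cdot)$ and any stable $\hat{\mathcal P}$, we get $C(\hat{\mathcal P}) \le C({\mathcal P}_{\rm self}) \le K \cdot C({\mathcal P}^\ast)$, so the ratio $C(\hat{\mathcal P})/C({\mathcal P}^\ast) \le K$, and taking the worst case gives ${\sf SPoA}^C_K \le K$.

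The argument is essentially routine. The one subtlety to watch is that the blocking-coalition definition requires a \emph{strict} improvement for every member, but here the singleton deviation has a single member $i$ with $u_i(\{i\}) = 0 > u_i(G)$ when $u_i(G) < 0$, so the strict inequality is automatic and no edge case needs special handling. A minor thing to keep in mind is that we implicitly rely on $C({\mathcal P}^\ast) > 0$ when dividing (otherwise both sides are trivially comparable), but in the nondegenerate cost-sharing setting considered this poses no issue.
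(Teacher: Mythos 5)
Your proof is correct and is the standard argument: cost monotonicity plus $|G|\le K$ gives $C({\mathcal P}_{\rm self})\le K\,C({\mathcal P}^\ast)$, and non-negativity of utilities at a stable structure (via the singleton blocking deviation) combined with budget balance gives $C(\hat{\mathcal P})\le C({\mathcal P}_{\rm self})$. The paper itself imports this lemma from \cite{CE17sharing} without reproducing a proof, so there is nothing to diverge from; your only implicit addition worth flagging is that the first inequality genuinely needs cost monotonicity, which the lemma statement does not list explicitly but which the paper treats as a standing assumption.
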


\begin{customthm}{1}
Consider a budget balanced cost-sharing mechanism $p_{i}(\cdot)$. Denote the cost-based and utility-based strong prices of anarchy by ${\sf SPoA}_K^C$ and ${\sf SPoA}_K^u$, respectively. Then
\begin{equation}\label{e1-1}
{\sf SPoA}_K^u \ge \frac{K-1}{K-{\sf SPoA}_K^C}
\end{equation}
\begin{equation}\label{e1-2}
{\sf SPoA}_K^C \le K-\frac{K-1}{{\sf SPoA}_K^u}
\end{equation}
\end{customthm}

\begin{proof}
Since $p_{i}(\cdot)$ is budget balanced, we obtain $C({\mathcal P'}^{\ast}) = C({\mathcal P}^{\ast})$, where ${\mathcal P'}^{\ast}$ is a utility-based social optimum and ${\mathcal P}^{\ast}$ is a cost-based social optimum. Consider a worst-case stable coalition structure $\hat{\mathcal P}$. Then
\begin{align}
{\sf SPoA}_K^u  \ge & \frac{\sum_{G \in {\mathcal P}^\ast} \sum_{i \in G} u_i(p_{i}(G))}{\sum_{G \in \hat{\mathcal P}} \sum_{i \in G} u_i(p_{i}(G))} \\
 = & \frac{C({\mathcal P}_{\rm self}) - C({\mathcal P}^\ast)}{C({\mathcal P}_{\rm self}) - C(\hat{\mathcal P})} 
 = \displaystyle \frac{\frac{C({\mathcal P}_{\rm self})}{C({\mathcal P}^\ast)} - 1}{\frac{C({\mathcal P}_{\rm self})}{C({\mathcal P}^\ast)} - \frac{C(\hat{\mathcal P})}{C({\mathcal P}^\ast)}} \\
= & \displaystyle \frac{\frac{C({\mathcal P}_{\rm self})}{C({\mathcal P}^\ast)} - 1}{\frac{C({\mathcal P}_{\rm self})}{C({\mathcal P}^\ast)} - {\sf SPoA}_K^C}
\end{align}
Since $\frac{C({\mathcal P}_{\rm self})}{C({\mathcal P}^\ast)} \le K$ (by Lemma~\ref{thm:gen}) and ${\sf SPoA}_K^C \ge 1$, the {\color{black} minimum} is attained when $\frac{C({\mathcal P}_{\rm self})}{C({\mathcal P}^\ast)} = K$.
\end{proof}

%{\bf Remarks:} Theorem~\ref{thm:poau} allows us to lower (resp., an upper) bound  the price of anarchy in terms of utility (resp., cost) from a lower (resp., an upper) bound on the price of anarchy in terms of cost (resp., utility). For instance, it is known from \cite{ADN09} that for $K=2$, ${\sf SPoA}^{u,{\rm eq}}_2 = 2$. From (\ref{e1-1}), we obtain that ${\sf SPoA}^{C,{\rm eq}}_2\le\frac{3}{2}.$ Similarly we obtain that ${\sf SPoA}^{C,{\rm ega}}_K\le\frac{3}{2}$.

\medskip 

Given an order of users ${\mathcal N} = \{i_1, ..., i_n\}$ and cost-sharing mechanism $p_{i}(\cdot)$, we define
\begin{equation}\label{sum}
\alpha\big(p_{i}(\cdot)\big)\triangleq\max_{\stackrel{C(\cdot), G \in {\mathscr P}_K}{\forall s=1,\ldots,k:~p_{i_s}\big(H_s(G)\big)\ge 0}}  \frac{\displaystyle \sum_{s\in\{1,...,n\}:i_s \in G} p_{i_s}\Big(H_s(G)\Big)}{C(G)},
\end{equation}
where 
\begin{equation}
H_s(G) \triangleq \left\{
\begin{array}{ll}
G \backslash \{i_1,...,i_{s-1}\}, & \mbox{\ if\ } i_s \in G, \\
\varnothing,  & \mbox{\ if\ } i_s \notin G\\
\end{array}
\right.
\end{equation}

\bigskip 

In the following, we made a slight but critical modification of a Lemma in \cite{CE17sharing}, which allows us to improve the bound ${\sf SPoA}_K^C$ for egalitarian-split cost-sharing. {\color{black} The crucial observation is that it suffices to consider payment functions which assume non-negative values on the coalitions belonging to a stable coalition structure. Such non-negativity requirement is necessary in the proof of the following lemma and was guaranteed in  \cite{CE17sharing} for  {\it every} coalition structure at the cost of losing a factor of $\sqrt{K}$ in the total cost. However as seen in the proof of the following lemma, it suffices to have non-negativity of payments only on coalitions belonging to a stable coalition structure, a property that is guaranteed to hold of all stable coalition structures under Nash bargaining solution by Lemma~\ref{lem:nashpositive}.}

\begin{lemma}\label{lem:common}  
Consider {\color{black} any }budget balanced cost-sharing mechanism $p_{i}(\cdot)$, which assumes a non-negative value on any coalition belonging to a stable coalition structure. Given a stable coalition structure $\hat{\mathcal P}$ and any coalition structure {\color{black}${{\mathcal P}} \in {\mathscr P}_K$}, we have the upper bound:
\begin{equation}\label{eq:common}
\frac{C(\hat{\mathcal P})}{C({{\mathcal P}})} \le \alpha\big(p_{i}(\cdot)\big)
\end{equation}
\end{lemma}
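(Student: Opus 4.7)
The plan is a charging argument built on an adaptively chosen ordering $i_1,\ldots,i_n$ of the users, leveraging the stability of $\hat{\mathcal P}$. The target bound rearranges into an inequality between the payments induced by $\hat{\mathcal P}$ and peeled-tail payments on coalitions in ${\mathcal P}$, which is exactly what the definition of $\alpha\big(p_i(\cdot)\big)$ controls coalition-by-coalition. To start, budget balance gives
\[
C(\hat{\mathcal P}) \;=\; \sum_{G \in \hat{\mathcal P}} \sum_{i \in G} p_i(G) \;=\; \sum_{i \in {\mathcal N}} p_i(\hat G_i),
\]
where $\hat G_i$ is the coalition of $\hat{\mathcal P}$ containing $i$. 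The aim is to charge each $p_i(\hat G_i)$ to a summand $p_{i_s}\big(H_s(G)\big)$ for an appropriate $G \in {\mathcal P}$ and an appropriate position $s$ in the ordering.

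The core step is the inductive construction of the ordering. Write ${\mathcal P}=\{G^1,\ldots,G^m\}$ and maintain peeled versions $\tilde G^\ell$ initialized to $G^\ell$. At each step $t$, pick any non-empty $\tilde G^\ell$; since $|\tilde G^\ell| \le K$ and $\hat{\mathcal P}$ is stable, $\tilde G^\ell$ cannot be a blocking coalition, so there exists $i \in \tilde G^\ell$ with $u_i(\tilde G^\ell) \le u_i(\hat G_i)$, equivalently $p_i(\tilde G^\ell) \ge p_i(\hat G_i)$. Set $i_t := i$ and delete $i$ from $\tilde G^\ell$. Because ${\mathcal P}$ is a partition, every user is placed into the ordering exactly once. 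By construction, the peeled coalition at the moment $i_t$ is selected equals $H_t(G^\ell) = G^\ell \setminus \{i_1,\ldots,i_{t-1}\}$, so
\[
p_{i_t}\big(H_t(G^\ell)\big) \;\ge\; p_{i_t}\big(\hat G_{i_t}\big) \;\ge\; 0,
\]
where the second inequality uses the standing hypothesis that $p_i(\cdot)$ is non-negative on coalitions of any stable structure.

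Summing over $t$ and regrouping by coalitions of ${\mathcal P}$ yields
\[
C(\hat{\mathcal P}) \;=\; \sum_{t=1}^n p_{i_t}(\hat G_{i_t}) \;\le\; \sum_{G^\ell \in {\mathcal P}} \sum_{s:\, i_s \in G^\ell} p_{i_s}\big(H_s(G^\ell)\big).
\]
Because every summand on the right is non-negative and $|G^\ell| \le K$, each pair $\bigl(C(\cdot),G^\ell\bigr)$ lies in the feasible domain of the maximum defining $\alpha\big(p_i(\cdot)\big)$; hence $\sum_{s:\, i_s \in G^\ell} p_{i_s}\big(H_s(G^\ell)\big) \le \alpha\big(p_i(\cdot)\big)\cdot C(G^\ell)$. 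Summing over $\ell$ delivers the claimed $C(\hat{\mathcal P}) \le \alpha\big(p_i(\cdot)\big)\cdot C({\mathcal P})$.

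The main obstacle is twofold: first, showing that the inductive ordering step is always well-defined, which reduces cleanly to invoking stability on the current peeled coalition; and second, ensuring that the peeled payments $p_{i_s}\big(H_s(G^\ell)\big)$ are non-negative so that the $\alpha$-bound is legally applied. The second point is the crucial relaxation relative to the prior version of the lemma in \cite{CE17sharing}: there, non-negativity was required on \emph{every} coalition, costing a $\sqrt{K}$ factor for egalitarian-split; here, we need it only on coalitions appearing in stable structures, and this weaker requirement is exactly what enables the improved $O(\log K)$ bound in Theorem~\ref{thm:egaSPoA}.
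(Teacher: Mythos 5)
Your proof is correct and follows essentially the same route as the paper's: peel each coalition of ${\mathcal P}$ one member at a time, using stability of $\hat{\mathcal P}$ to find at each step a member whose payment in the peeled coalition dominates their (non-negative) payment in $\hat{\mathcal P}$, then regroup by coalition and invoke the definition of $\alpha$. The only cosmetic differences are that you interleave the peeling across coalitions into one global ordering and finish by bounding each coalition's ratio by $\alpha$ and summing, whereas the paper peels each $G_t$ with its own index sequence and closes with the mediant inequality $\sum_t a_t / \sum_t b_t \le \max_t a_t/b_t$; these are equivalent.
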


\begin{proof}
	Let ${\mathcal P} = \{G_1,...,G_h\}$. Define $H_1^1\triangleq G_1$. Then there exists a user $i_1^1\in H_1^1$ and a coalition $\hat G_1^1\in\hat{\mathcal P}_K$, such that $i_1^1\in\hat G_1^1$ and $p_{i_1^1}(H_1^1)\ge p_{i_1^1}(\hat G_1^1)\ge 0$; otherwise, all the users in $H_1^1$ would form a coalition $H_1^1$ to strictly reduce their payments, which contradicts the fact that $\hat{\mathcal P}_K$ is a stable coalition structure.
	
	Next, define $H_2^1\triangleq H_1^1\backslash\{i_1^1\}$. Note that $H_2^1$ is a feasible coalition, because arbitrary coalition structures with at most $K$ users per coalition are allowed in our model.
	By the same argument, there exists $i_2^1\in H_2^1$ and a coalition $\hat G_2^1\in\hat{\mathcal P}_K$, such that $i_2^1\in\hat G_2^1$ and $p_{i_2^1}(H_2^1)\ge p_{i_2^1}(\hat G_2^1)\ge 0$.
	
	Let $G_t = \{i^t_1,...,i^t_{K_t}\}$, for any $t\in\{1,...,h\}$. Continuing this argument, we obtain a collection of sets $\{H_s^t\}$, where each $H_s^t \triangleq\{i^t_s,...,i^t_{K_t}\}$ satisfies the following condition:
	\begin{quote}
		
		{\color{black}For }any $t\in\{1,...,h\}$ and $s\in\{1,..., K_t\}$, there exists $\hat G^t_s\in\hat{\mathcal P}_K$, such that $i_s^t\in\hat G^t_s$ and 
		$p_{i^t_s}(H_s^t)\ge p_{i^t_s}(\hat G^t_s)\ge 0$.
		
	\end{quote}
	%&G_r=H_1^t\supset H_2^t\supset\cdots\supset H_{|G_r|}^t\\
	Because of budget balance, we have
	\begin{equation}
        \sum_{t=1}^{h} \sum_{s=1}^{K_t}p_{i^t_s}(\hat{G}^t_s) = C(\hat{\mathcal P}_K)
	\end{equation}
	
	Hence, the SPoA, ${\sf SPoA}_K^C$, with respect to $\{p_{i}(\cdot)\}_{i \in {\mathcal N}}$ is upper bounded by
	\begin{align}
	\frac{C(\hat{\mathcal P}_K)}{C({\mathcal P})} &=\frac{\sum_{t=1}^{h} \sum_{s=1}^{K_t}p_{i^t_s}(\hat{G}^t_s)}{\sum_{t=1}^h C(G_t)}\\
	&\le \frac{\sum_{t=1}^{h} \sum_{s=1}^{K_t}p_{i^t_s}(H^t_s)}{\sum_{t=1}^hC(H^t_1)}\\
	&\le\max_{t\in\{1,...,h\}}\frac{ \sum_{s=1}^{K_t}p_{i^t_s}(H^t_s)}{C(H^t_1)}\le \alpha\big(\{p_{i}(\cdot)\}_{i \in {\mathcal N}}\big)
	\end{align}
	because $\alpha(\cdot)$ is non-decreasing in $K$.
\end{proof}

\subsection{Egalitarian-split Cost-Sharing}

Given cost function $C(\cdot)$, we define a truncated cost function $\tilde{C}(\cdot)$ as follows: 
\begin{equation} \label{eqn:truncated}
\tilde{C}(G) \triangleq \left\{
\begin{array}{ll}
C(G), & \mbox{\ if\ } C(G) \le \sum_{j \in G} C_j\\
\sum_{j \in G} C_j, & \mbox{\ if\ } C(G) > \sum_{j \in G} C_j\\
\end{array}
\right.
\end{equation}
Note that $\tilde{C}(G) \le \sum_{j \in G} C_j$ for any $G$.

As mentioned earlier, egalitarian-split cost-sharing is equivalent to Nash bargaining solution. Thus, $${\sf SPoA}^{C,\rm ega}_K = {\sf SPoA}^{C,\rm ns}_K$$

\begin{lemma}[\cite{CE17sharing}] \label{lem:subadd_ega} 
For egalitarian and Nash bargaining solutions, we have
\begin{equation}
 {\sf SPoA}^{C,{\rm ega}}_K (C(\cdot)) =  {\sf SPoA}^{C,{\rm ega}}_K (\tilde{C}(\cdot)) 
\end{equation}
\end{lemma}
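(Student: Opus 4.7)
The plan is to show that truncating $C(\cdot)$ to $\tilde C(\cdot)$ does not alter the worst-case ratio, by establishing (i) the value of the social optimum is unchanged and (ii) the stable coalition structures under the two cost functions can be matched in pairs with equal cost. Once these two facts are in place, equality of the two SPoAs follows immediately by comparing numerator and denominator of the defining ratio.

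The starting observation is that, under egalitarian-split cost-sharing (equivalently the Nash bargaining solution), stability forces the per-user utility $\frac{\sum_{j\in G}C_j-C(G)}{|G|}\ge 0$ for every coalition $G$ appearing in a stable structure, since otherwise the singleton $\{i\}$ — which gives utility $0$ by the normalization $u_i(\{i\})=0$ — would strictly block $G$. Hence every such $G$ satisfies $C(G)\le\sum_{j\in G}C_j$, i.e., $\tilde C(G)=C(G)$, so the $C$-cost and the $\tilde C$-cost of any stable coalition structure coincide. The same reasoning applied to a cost-minimizer shows that any $G$ with $C(G)>\sum_{j\in G}C_j$ can be broken into singletons to strictly reduce $C(\cdot)$, so there is always an optimum ${\mathcal P}^\ast$ for $C(\cdot)$ in which every $G$ satisfies $C(G)=\tilde C(G)$. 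Combined with $\tilde C\le C$ pointwise, this yields $C({\mathcal P}^\ast)=\tilde C(\tilde{\mathcal P}^\ast)$, so the denominators of the two SPoAs agree.

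For the numerators, I would argue both directions. For $\le$: any $\hat{\mathcal P}$ stable under $C$ is in fact also stable under $\tilde C$, because a hypothetical blocker $H$ under $\tilde C$ either satisfies $\tilde C(H)=C(H)$ (in which case $H$ would also block under $C$, contradiction) or $\tilde C(H)=\sum_{j\in H}C_j$, forcing $u_i^{\rm ega}(H)=0$ under $\tilde C$, which cannot strictly exceed the non-negative utilities $u_i^{\rm ega}(G_i)$ (which are identical under $C$ and $\tilde C$ by the previous paragraph). For $\ge$: given $\hat{\mathcal P}$ stable under $\tilde C$, I would split every $G\in\hat{\mathcal P}$ with $\tilde C(G)<C(G)$ (necessarily $\tilde C(G)=\sum_{j\in G}C_j$) into singletons to obtain $\hat{\mathcal P}'$; this preserves $\tilde C$-cost, makes $C(\hat{\mathcal P}')=\tilde C(\hat{\mathcal P}')$, and any $C$-blocker $H$ of $\hat{\mathcal P}'$ (which must satisfy $C(H)<\sum_{j\in H}C_j$ to yield positive utility, hence $\tilde C(H)=C(H)$) would also block $\hat{\mathcal P}$ under $\tilde C$. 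The main subtlety I expect is precisely this asymmetry near the truncation boundary — $\tilde C$-stability may tolerate ``zero-utility'' coalitions that are not $C$-stable — and the splitting construction is what closes the $\ge$ direction.
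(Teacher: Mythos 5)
Your proposal is correct and rests on the same key observation as the paper's (sketched, cited) proof: under egalitarian/Nash bargaining sharing, individual rationality forces $C(G)\le\sum_{j\in G}C_j$ on every coalition of a stable structure, and the analogous splitting argument shows the social optimum also never uses truncated coalitions, so $C$ and $\tilde C$ agree wherever the SPoA ratio is evaluated. Your additional two-directional matching of stable structures --- in particular the singleton-splitting step handling zero-utility truncated coalitions that are $\tilde C$-stable but not $C$-stable --- is a genuine refinement that the paper's terse argument glosses over, and it is carried out correctly.
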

\iffalse
\begin{proof}
The proof is similar to that of Lemma~\ref{lem:subadd_pp}. First, we show that if $\hat{\mathcal P}$ is a stable coalition structure, then for any $G \in \hat{\mathcal P}$, we obtain $C(G) \le \sum_{t \in G} C_t$ based on a contradiction. If we assume $C(G) > \sum_{t \in G} C_t$ for some $G \in \hat{\mathcal P}$, then for all $i \in G$,
\begin{equation}
p^{\rm ega}_{i}(G) = C_i - \frac{(\sum_{t \in G} C_t) - C(G)}{|G|} > C_i
\end{equation} 
This is a contradiction, because this implies that $\hat{\mathcal P}$ cannot be a stable coalition structure.

Second, we note that if ${\mathcal P}^\ast$ is a social optimum, then for any $G \in {\mathcal P}^\ast$, we obtain $C(G) \le \sum_{t \in G} C_t$. 
Therefore, we obtain ${\sf SPoA}^{C,{\rm ega}}_K (C(\cdot))  = {\sf SPoA}^{C,{\rm ega}}_K (\tilde{C}(\cdot))$.
\end{proof}
\fi

\begin{lemma}[\cite{CE17sharing}] \label{lem:nashpositive} 
	For Nash bargaining solution, given a stable coalition structure $\hat{\mathcal P} \in {\mathscr P}_K$ and $G\in \hat{\mathcal P}$, then every user has non-negative payment: $p_i^{\rm ns}(G) \ge 0$ for all $i\in G$. 
\end{lemma}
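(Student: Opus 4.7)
The plan is to argue by contradiction and exploit the paper's established equivalence $p^{\rm ns}_i(G) = p^{\rm ega}_i(G)$, so that the Nash bargaining payment admits the closed form $p^{\rm ega}_i(G) = C_i - \frac{(\sum_{j\in G}C_j) - C(G)}{|G|}$. Suppose toward contradiction that $p^{\rm ns}_i(G) < 0$ for some $i \in G$ and some $G \in \hat{\mathcal P}$; the case $|G|=1$ is trivial since then $p^{\rm ns}_i(\{i\}) = C_i \ge 0$, so assume $|G| \ge 2$.

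Let $s = |G|$ and $S = \sum_{j \in G} C_j$. The assumption $p^{\rm ega}_i(G) < 0$ is exactly $s C_i < S - C(G)$, equivalently
\begin{equation}
S - s C_i > C(G). \tag{$\star$}
\end{equation}
I will then produce a blocking coalition by dropping $i$. Set $G' \triangleq G \setminus \{i\}$; note $|G'| = s-1 \le K$, so $G'$ is feasible. Under the egalitarian/Nash rule applied to $G'$, every $j \in G'$ would receive $u_j(G') = \frac{(S - C_i) - C(G')}{s-1}$, whereas in $\hat{\mathcal P}$ each $j \in G'$ currently receives $u_j(G) = \frac{S - C(G)}{s}$.

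The key inequality to verify is $u_j(G') > u_j(G)$, which after clearing denominators is equivalent to
\begin{equation}
S - s C_i > s C(G') - (s-1) C(G).
\end{equation}
By $(\star)$, $S - sC_i > C(G)$, and by cost monotonicity $C(G) \ge C(G')$, hence $C(G) \ge s C(G') - (s-1) C(G)$. Chaining these two gives the desired strict inequality. Therefore every member of $G'$ strictly prefers $G'$ to their current coalition $G$, so $G'$ is a blocking coalition, contradicting the stability of $\hat{\mathcal P}$.

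I do not expect a significant obstacle here: the argument is a short contradiction built from the egalitarian formula and weak cost monotonicity. The only subtle point is preserving strictness, which is automatic because $(\star)$ is strict while cost monotonicity need only be weak; this combination suffices to guarantee that $G'$ strictly blocks $G$.
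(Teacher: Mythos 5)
Your argument is correct. The paper itself gives no proof of this lemma (it is imported from \cite{CE17sharing}), so there is nothing to compare line-by-line, but your route is the natural one: invoke $p^{\rm ns}_i=p^{\rm ega}_i$, assume $p^{\rm ega}_i(G)<0$ for some $i\in G$ with $|G|=s\ge 2$, and show that $G'=G\setminus\{i\}$ blocks. The algebra checks out: $u_j(G')>u_j(G)$ reduces to $S-sC_i>sC(G')-(s-1)C(G)$, and your chain $S-sC_i>C(G)\ge sC(G')-(s-1)C(G)$ uses exactly the strict inequality from the negativity assumption plus weak monotonicity $C(G')\le C(G)$, so strictness is preserved; the singleton case $s=2$ (where $G'=\{j\}$ and $u_j(\{j\})=0$) is also a legitimate blocking deviation under the paper's definition. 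One point worth making explicit: your proof genuinely requires cost monotonicity, which does not appear in the lemma's stated hypotheses. It is not a removable convenience --- e.g.\ $C_i=1$, $C_j=10$, $C(\{i,j\})=2$ yields a stable two-person coalition with $p^{\rm ega}_i=-3.5$, so the conclusion is false without monotonicity. Since the paper treats monotonicity as a standing property of its cost-sharing applications and the lemma is only applied inside the proof of Theorem~\ref{thm:egaSPoA}, which assumes it, this is consistent with the intended scope, but you should state the hypothesis rather than use it silently.
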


\begin{lemma}[\cite{CE17sharing}]  \label{lem:nashstb2} 
Let $b_s \triangleq  C(H_s)$.
Consider the following maximization problem:
\begin{align}
& (\textsc{M1}) \ y^\ast(K) \triangleq  \max_{\{C_s, b_s\}_{s=1}^K}  \sum_{s=1}^{K} \Big( C_s - \frac{(\sum_{t=s}^{K}C_t) - b_s}{K-s+1} \Big) \\
& \text{subject to} \notag\\
& \quad b_s\le \sum_{t=s}^KC_t,~\text{for all } s=1,...,K-1, \\
& \quad  0\le C_s\le b_s\le b_{s+1}\le 1,~\text{for all } s=1,...,K, \\
& \quad b_1 + K C_s - \sum_{t=1}^{K}C_t \ge 0,~\text{for all } s =1,...,K \label{con:positive-}
\end{align}
The maximum of (\textsc{M1}) is upper bounded by $y^\ast(K) \le 1 + {\mathcal H}_{K-1}  = O(\log K)$.
\end{lemma}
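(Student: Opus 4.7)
My plan is to first eliminate the auxiliary variables $b_s$ by choosing them optimally, then exploit the positivity constraint \raf{con:positive-} (which is the fingerprint of Nash bargaining, enforced via Lemma~\ref{lem:nashpositive}) to lower-bound the penalty term. Writing $S_s \triangleq \sum_{t=s}^K C_t$, the objective is $S_1 - \sum_{s=1}^K (S_s - b_s)/(K-s+1)$, which is strictly increasing in each $b_s$. The constraints $b_s\le S_s$ and $b_s \le 1$ are saturated by $b_s = \min(1, S_s)$, and since $S_s$ is non-increasing in $s$ the resulting sequence is monotone and consistent with the monotonicity constraint imposed on the $b_s$. With this choice the objective reduces to
\begin{equation}
S_1 - \sum_{s:\, S_s > 1}\frac{S_s - 1}{K-s+1}.
\end{equation}
If $S_1 \le 1$, this is trivially $\le 1 \le 1 + \mathcal{H}_{K-1}$, and we are done.

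So assume $S_1 > 1$, whence $b_1 = 1$. Constraint \raf{con:positive-} then forces $C_s \ge \gamma \triangleq (S_1-1)/K$ for every $s$, which gives the key inequality $S_s \ge (K-s+1)\gamma$. Consequently, for every $s$ with $(K-s+1)\gamma \ge 1$ the corresponding penalty term is at least $\gamma - 1/(K-s+1)$. Letting $m$ be the largest $s$ with $(K-s+1)\gamma \ge 1$ (with the convention $m = 0$ if no such $s$ exists), telescoping gives
\begin{equation}
\mathrm{penalty}\ \ge\ m\gamma - \bigl(\mathcal{H}_K - \mathcal{H}_{K-m}\bigr),
\end{equation}
and substituting $S_1 = K\gamma + 1$ yields the master bound
\begin{equation}
\text{objective}\ \le\ (K-m)\gamma + 1 + \mathcal{H}_K - \mathcal{H}_{K-m}.
\end{equation}

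The last step is a short case analysis keyed off the bracket $(K-m)\gamma < 1 \le (K-m+1)\gamma$ from the definition of $m$, together with the universal upper bound $\gamma \le (K-1)/K$ that follows from $C_s \le b_s \le 1$ and hence $S_1 \le K$. When $K-m \ge 2$, the crude estimate $(K-m)\gamma < 1$ combined with $\mathcal{H}_{K-m}\ge\mathcal{H}_2 = 3/2 \ge 1 + 1/K$ will suffice to bound the right-hand side by $1 + \mathcal{H}_{K-1} = \mathcal{H}_K + 1 - 1/K$. The borderline $K-m = 1$ requires instead the sharper $\gamma \le (K-1)/K$, which collapses $(K-m)\gamma + \mathcal{H}_K - \mathcal{H}_1$ exactly into $\mathcal{H}_{K-1}$. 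Finally $m = 0$ forces $K\gamma < 1$, hence $S_1 < 2$, so the objective is trivially $< 2 \le 1 + \mathcal{H}_{K-1}$ for $K \ge 2$. I expect the main delicacy to lie in the borderline case $K-m = 1$: the crude bound $(K-m)\gamma < 1$ loses the needed $1/K$ of slack, and it is the previously unused constraint $C_s \le 1$ (inherited via $C_s\le b_s\le 1$) that exactly recovers it and matches the asserted bound.
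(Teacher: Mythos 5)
The paper does not actually prove this lemma---it is imported from \cite{CE17sharing}---so there is no in-paper proof to compare against; I can only assess your argument on its own terms. Your overall strategy (eliminate the $b_s$ using monotonicity of the objective in each $b_s$, use the positivity constraint to force $C_s \ge \gamma \triangleq (S_1-1)/K$ and hence $S_s \ge (K-s+1)\gamma$, telescope the resulting harmonic sum, then close with a three-way case analysis on $m$) is sound, and the three cases do check out: for $K-m\ge 2$ you need $(K-m)\gamma < 1 \le \mathcal{H}_{K-m}-1/K$, which holds since $\mathcal{H}_{K-m}\ge 3/2$ and $K\ge 2$; for $K-m=1$ the bound $\gamma\le (K-1)/K$ coming from $S_1\le K$ gives exactly $\mathcal{H}_1-1/K$; and $m=0$ forces $S_1<2$.

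There is, however, one genuine mismatch with the constraints as literally stated. The subadditivity constraint $b_s\le\sum_{t=s}^K C_t$ is imposed only for $s=1,\dots,K-1$, so $b_K$ is \emph{not} bounded by $S_K=C_K$; the only cap on $b_K$ is $b_K\le 1$. Since the $s=K$ term of the objective is $C_K-(S_K-b_K)=b_K$, the maximizer takes $b_K=1$, not your $\min(1,S_K)=\min(1,C_K)$, and the objective you analyze undershoots the true one by $1-C_K$. Repairing this with the only available lower bound $C_K\ge\gamma$ turns your master bound into $(K-m-1)\gamma+2+\mathcal{H}_K-\mathcal{H}_{K-m}$, which in your own ``delicate'' borderline case $K-m=1$ evaluates to $1+\mathcal{H}_K$, exceeding the claimed $1+\mathcal{H}_{K-1}$ by $1/K$. (A related but harmless point: the stated monotonicity is $b_s\le b_{s+1}$, so $\min(1,S_s)$, which is non-increasing, is not ``consistent'' with it as you assert; this does not matter because you may simply drop that constraint when proving an upper bound, whereas the $b_K$ issue restricts the feasible set and therefore does matter.) In the application the gap closes automatically---$H_K=\{i_K\}$ is a singleton, so $b_K=C(H_K)=C_K$ and subadditivity holds at $s=K$ with equality---so your argument is correct for the problem the lemma is actually used for in Theorem~\ref{thm:egaSPoA}; but as a proof of (\textsc{M1}) exactly as written it needs either the $s=K$ subadditivity constraint added or a sharper treatment of the $s=K$ term.
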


\begin{customthm}{2} 
For egalitarian-split cost-sharing mechanism and Nash bargaining solution with cost monotonicity, the SPoA is upper bounded by
\begin{equation}
{\sf SPoA}^{C,{\rm ega}}_K  ={\sf SPoA}^{C,\rm ns}_K= O(\log K)
\end{equation} 
\end{customthm}

\begin{proof}
First, by Lemma~\ref{lem:subadd_ega}, it suffices to consider egalitarian bargaining (or Nash bargaining ) solution with {\color{black} a} cost function satisfying $C(G) \le \sum_{j \in G} C_j$ for any $G$. 

Next, by Lemma~\ref{lem:nashpositive}, for any stable coalition structure $\hat{\mathcal P}$, we have $p^{\rm ns}_{i}(G) \ge 0$ for all $i \in G \in \hat{\mathcal P}$.
Hence, by Lemma~\ref{lem:common}, ${\sf SPoA}^{C,{\rm ega}}_K  \le  \alpha\big(\{p^{\rm ega}_{i}(\cdot)\}\big)$.

Let $H_s = \{i_s, ..., i_K \}$, with the default costs denoted by $\{C_s, ..., C_K \}$.
Recall that egalitarian bargaining solution is given by
$$
p^{\rm ega}_{i_s}(H_s) = C_s - \frac{(\sum_{t=s}^{K}C_t) - C(H_s)}{K-s+1} 
$$
subject to $C(H_1) \ge ... \ge C(H_K)$ and $C(H_s) \ge \max\{C_s,...,C_K\}$ (by monotonicity), and $C(H_s) \le \sum_{t =s}^{K} C_t$ (by Lemma~\ref{lem:subadd_ega}).  Finally, it follows that {\color{black}${\sf SPoA}^{\rm eqa}_K = O(\log K)$} by Lemma~\ref{lem:nashstb2}. 
\end{proof}

\subsection{Lower Bound on SPoA}

%We consider practical cost-sharing mechanisms that can lead to a stable coalition structure in polynomial time as the number of participants. The question is what the best lower bound of SPoA can feasibly get is $\Omega(\log (K))$. Namely, it is not possible to obtain a stable coalition structure in polynomial time with SPoA strictly less than $O(\log (K))$.

%Our approach is based on a reduction from set cover problem. We show that any set cover problem can be reduced to a coalition formation problem. The minimal cost set cover is equivalent to the social optimum in coalition formation. Hence, any outcome of coalition formation cannot be approximated closer to the social optimum with $\Omega(\log (K))$ because of the hardness results of approximating set cover problem.

\begin{customthm}{3} 
Under the assumption that $P\ne NP$, for any cost-sharing mechanism, such that a stable coalition structure can be found in polynomial-time, its SPoA is lower bounded by $\Omega(\log K)$.
\end{customthm}

\begin{proof}
We first model the set cover problem as a coalition formation problem in the following way. Given a collection of sets $\{S_1,...,S_m\}$ of size at most $K$ and the corresponding costs $\{c(S_1),..,c(S_m)\}$, we define a cost function over any subset $G \subseteq \bigcup_{j=1}^m S_j$ as follows:
\begin{equation}
C(G)= \left\{
\begin{array}{cl}
\min_{j \mid G \subseteq S_j} c(S_j), &  \mbox{if there exists $S_j \supseteq G$}\\
  +\infty, & \mbox{otherwise}
\end{array}
\right.
\end{equation}
Note that $C(\cdot)$ obeys cost monotonicity (but $C(\emptyset)>0$). 

We consider coalition formation with the set of participants defined by ${\mathcal N} = \bigcup_{j=1}^m S_j$. Given a social optimum ${\mathcal P}^\ast = \arg\min_{{\mathcal P} \in {\mathscr P}_K} C({\mathcal P})$, it has to be formed by only coalitions that are subsets of the $S_j$'s. Namely, $G \in {\mathcal P}^\ast$, there exists $S_j \supseteq G$.  For each  $G \in {\mathcal P}^\ast$, we take the minimum-cost set $S_j \supseteq G$ and will obtain a set cover ${\mathcal S}$ over ${\mathcal N}$ with exactly the same cost by using $\{S_1, ..., S_m\}$ instead. Namely,
$${\mathcal S} \triangleq \Big\{S_j \mid S_j \supseteq G, C(G) = \min_{j \mid G \subseteq S_j} c(S_j), G \in {\mathcal P}^\ast\Big\}$$  
Note that ${\mathcal S}$ is a minimum-cost set cover over ${\mathcal N}$. Otherwise, ${\mathcal P}^\ast$ is not a social optimum; there is another minimum-cost set cover that can induce a coalition structure with the lower social cost than ${\mathcal P}^\ast$. This shows the reduction of the set cover problem to the coalition formation problem. 

The set cover problem is known to be inapproximable within an approximation ratio better than $\Omega(\log(K))$ in polynomial-time, unless {\sc P=NP}. Hence, in any cost-sharing mechanism, such that a stable coalition structure can be found in polynomial time, its SPoA is lower bounded by $\Omega(\log K)$.
\end{proof}

\subsection{Mixed Cost-Sharing Mechanisms}

\begin{customthm}{4} 
For a mixed cost-sharing mechanism of a constant number of constituent pure cost-sharing mechanisms, if the SPoA of each constituent pure cost-sharing mechanism is $O(f(K))$, then the SPoA of the mixed cost-sharing mechanism is also $O(f(K))$.
\end{customthm}

\begin{proof}
Suppose that the mixed cost-sharing mechanism consists of $M$ constituent pure cost-sharing mechanisms. Let $\hat{\mathcal P}$ be a stable coalition structure induced by the mixed cost-sharing mechanism. We can partition the set of coalitions of $\hat{\mathcal P}$ into $M$ subsets $\{ \hat{\mathcal P}_m\}_{m=1}^M$, where each $G \in \hat{\mathcal P}_m$ is divided by the $m$-th pure cost-sharing mechanism. We write $C(\hat{\mathcal P}_m)$ as the social cost considering the participants in $\hat{\mathcal P}_m$.

Given a subset of participants $x \subseteq {\mathcal N}$,  let $\mbox{\sc Opt}(X)$ be a social optimum considering only $X$. We denote the social cost of $\mbox{\sc Opt}(X)$ by $C(\mbox{\sc Opt}(X))$. Note that $C(\mbox{\sc Opt}(X)) \le C(\mbox{\sc Opt}(Y))$, if $X \subseteq Y$, because the cost function $C(\cdot)$ is monotone and non-negative.

The SPoA of a mixed cost-sharing mechanism can be expressed as $\max_{C(\cdot), \hat{\mathcal P}} \frac{\sum_{m=1}^M C(\hat{\mathcal P}_m)}{C(\mbox{\sc Opt}({\mathcal N}))}$. Next, we can bound the SPoA as follows:

\begin{align}
& \max_{C(\cdot), \hat{\mathcal P}} \frac{\sum_{m=1}^M C(\hat{\mathcal P}_m)}{C(\mbox{\sc Opt}({\mathcal N}))} \\
 \le  & \
\max_{C(\cdot), \hat{\mathcal P}} \frac{\sum_{m=1}^M C(\hat{\mathcal P}_m)}{\tfrac{1}{M}\sum_{m=1}^M C(\mbox{\sc Opt}(\hat{\mathcal P}_m))}  \\
\le & \ 
M \cdot \sum_{m=1}^M \max_{C(\cdot), \hat{\mathcal P}} \frac{ C(\hat{\mathcal P}_m)}{C(\mbox{\sc Opt}(\hat{\mathcal P}_m))} \\
\le & \ M^2 \cdot O(f(K))
\end{align}

Therefore,  the SPoA of the mixed cost-sharing mechanism is also $O(f(K))$, as $M$ is a constant.
\end{proof}

\subsection{Convergence to Stable Coalition Structure}

\begin{customthm}{5} 
If there is no cyclic preference, then {\sc Coln{-}Form} will converge to a stable coalition structure in $O(n^K)$.
\end{customthm}

\begin{proof}
First, we prove that {\sc Coln{-}Form} will terminate in finite time if there is no cyclic preference. We note at each round, one of the participants must make a proposed coalition, which has not been proposed before. Otherwise, the process would have terminated, because every participant is either suspended (i.e., $\mbox{\sc Suspend}_i =\mbox{\sc True}$) or having the proposed coalition not better than the current one held for consideration (i.e., $u_i({\sf H}_i) > u_i(\mbox{\sc Top}(\mbox{\sc Pref}_i))$).  Since every participant $i$'s preference $\mbox{\sc Pref}_i$ is a finite sequence, this process must terminate.

Next, we show that {\sc Coln{-}Form} terminates with a stable coalition structure. We prove by contradiction. Suppose $\tilde{\mathcal P}$ is a coalition structure outcome by {\sc Coln{-}Form}, and $\tilde{\mathcal P}$ is not stable. Namely, there exists a blocking coalition $G_1 \subseteq {\mathcal N}$ with respect to $\tilde{\mathcal P}$, where $|G_1|\le K$, and $u_i(G_1)  >  u_i(G)$ for all $i \in G'$ and $G \in{\mathcal G}_i \cap \tilde{\mathcal P}$.
However, $G_1$ must be a proposed coalition during the process of {\sc Coln{-}Form} and was rejected. Otherwise, $G'$ would be an outcome of {\sc Coln{-}Form}, because all participants in $G_1$ should accept $G_1$, as it is strictly better than the ones in $\tilde{\mathcal P}$. This implies that there exists a participant $i_1 \in G_1$, such that there exists $G_2 \in{\mathcal G}_{i_1}$, which was proposed during the process of {\sc Coln{-}Form} and is more preferred than $G_1$ by $i_1$, (i.e., $u_{i_1}(G_2) > u_{i_1}(G_1)$). 

If $G_2$ is not in $\tilde{\mathcal P}$, then we can similarly find another participant $i_2 \in G_2$, such that there exists $G_3 \in{\mathcal G}_{i_2}$, which was proposed during the process of {\sc Coln{-}Form} and is more preferred than $G_2$ by $i_2$ (i.e., $u_{i_2}(G_3) > u_{i_2}(G_2)$).  We repeat this argument until we find a coalition $G_t \in \tilde{\mathcal P}$ and is more preferred than $G_{t-1}$ by $i_{t-1}$ (i.e., $u_{i_{t-1}}(G_t) > u_{i_{t-1}}(G_{t-1})$. Also, we should be able to find $G_t$ such that $G_t \cap G_1 \ne \varnothing$ because some participants in $G_1$ must be in a coalition in $\tilde{\mathcal P}$. Recall that $\tilde{\mathcal P}$ is not stable, and $G_1$ is a blocking coalition with respect to $\tilde{\mathcal P}$. Namely, there exists $i_t \in G_t \cap G_1$ such that $u_{i_{t}}(G_1) > u_{i_{t}}(G_{t})$.

This creates a cyclic preference, which is a contradiction to the condition of the theorem. Hence, {\sc Coln{-}Form} should terminate with a stable coalition structure. The total running time is $O(n^K)$, because each $|{\mathcal G}_i | = O(n^{K-1})$ and there are total $O(n^K)$ number of coalitions in all preferences of participants. Each coalition must be proposed once only. The total running time is $O(n^K)$. 
\end{proof}

\end{document}